\long\def\comment#1{} 
\newcommand{\xmath}[1] {\ensuremath{#1}\xspace}
\newcommand{\blmath}[1] {\xmath{\bm{#1}}}
\newcommand{\Wb}{{\blmath W}}
\newcommand{\hb}{{\blmath h}}
\newcommand{\ub}{{\blmath u}}
\newcommand{\xb}{{\blmath x}}
\newcommand{\yb}{{\blmath y}}
\newcommand{\zb}{{\blmath z}}
\newcommand{\Cc}{\mathcal{C}}
\newcommand{\Uc}{\mathcal{U}}
\newcommand{\Sc}{\mathcal{S}}
\newcommand{\Xc}{\mathcal{X}}
\newcommand{\Yc}{\mathcal{Y}}
\newcommand{\Zc}{\mathcal{Z}}
\newcommand{\Phib}{{\boldsymbol {\Phi}}}
\newcommand{\Rd}{{\mathbb R}}
\newcommand{\phib}{{\boldsymbol{\phi}}}
\newcommand{\thetab}{{\boldsymbol {\theta}}}
\newcommand{\beq}{\begin{equation}}
\newcommand{\eeq}{\end{equation}}
\newcommand{\beqa}{\begin{eqnarray}}
\newcommand{\eeqa}{\end{eqnarray}}
\newcommand{\Fc}{{\mathcal F}}
\newtheorem{theorem}{Theorem}
\newtheorem{proposition}[theorem]{Proposition}
\newcommand{\cmark}{\textcolor{green!80!black}{\ding{51}}}
\newcommand{\xmark}{\textcolor{red}{\ding{55}}}
\begin{document}

\title{Cycle-free CycleGAN  using Invertible Generator for Unsupervised Low-Dose CT Denoising}
\date{\vspace{-4ex}}

\author{Taesung~Kwon,
        ~Jong~Chul~Ye,~\IEEEmembership{Fellow,~IEEE}
\thanks{T. Kwon, and J. C. Ye are with the Department of Bio and Brain Engineering, 
		Korea Advanced Institute of Science and Technology (KAIST), 
		Daejeon 34141, Republic of Korea (e-mail: \{star.kwon, jong.ye\}@kaist.ac.kr).
		J.C. Ye is also with the Department of Mathematical Sciences, KAIST.} 
}

\maketitle

\begin{abstract}
Recently, CycleGAN was shown to provide high-performance, ultra-fast denoising 
for low-dose X-ray computed tomography (CT) without the need for a paired training dataset.
Although this was possible thanks to cycle consistency,  CycleGAN requires two generators and two discriminators to enforce cycle consistency, demanding significant GPU resources and technical skills for training.
A recent proposal of tunable CycleGAN with Adaptive Instance Normalization (AdaIN) alleviates the problem in part by using a single generator. However, two discriminators and an additional AdaIN code generator are still required for training.
To solve this problem, here we present a novel {\em cycle-free} Cycle-GAN architecture, which consists of a single generator and a discriminator but still guarantees cycle consistency.
The main innovation comes from the observation that the use of an invertible generator automatically fulfills the cycle consistency condition and eliminates the additional discriminator in the CycleGAN formulation.
To make the invertible generator more effective, our network is implemented in the wavelet residual domain.
Extensive experiments using various levels of low-dose CT images confirm that our method can significantly
improve denoising performance using only 10\% of learnable parameters and faster training time compared to the conventional CycleGAN.
\end{abstract}

\begin{IEEEkeywords}
Low-dose CT, Deep learning, Unsupervised CT denoising, CycleGAN, Invertible Neural Network, Wavelet transform
\end{IEEEkeywords}

\IEEEpeerreviewmaketitle

\begin{figure*}[!t] 	
	\centerline{\includegraphics[width=0.99\linewidth]{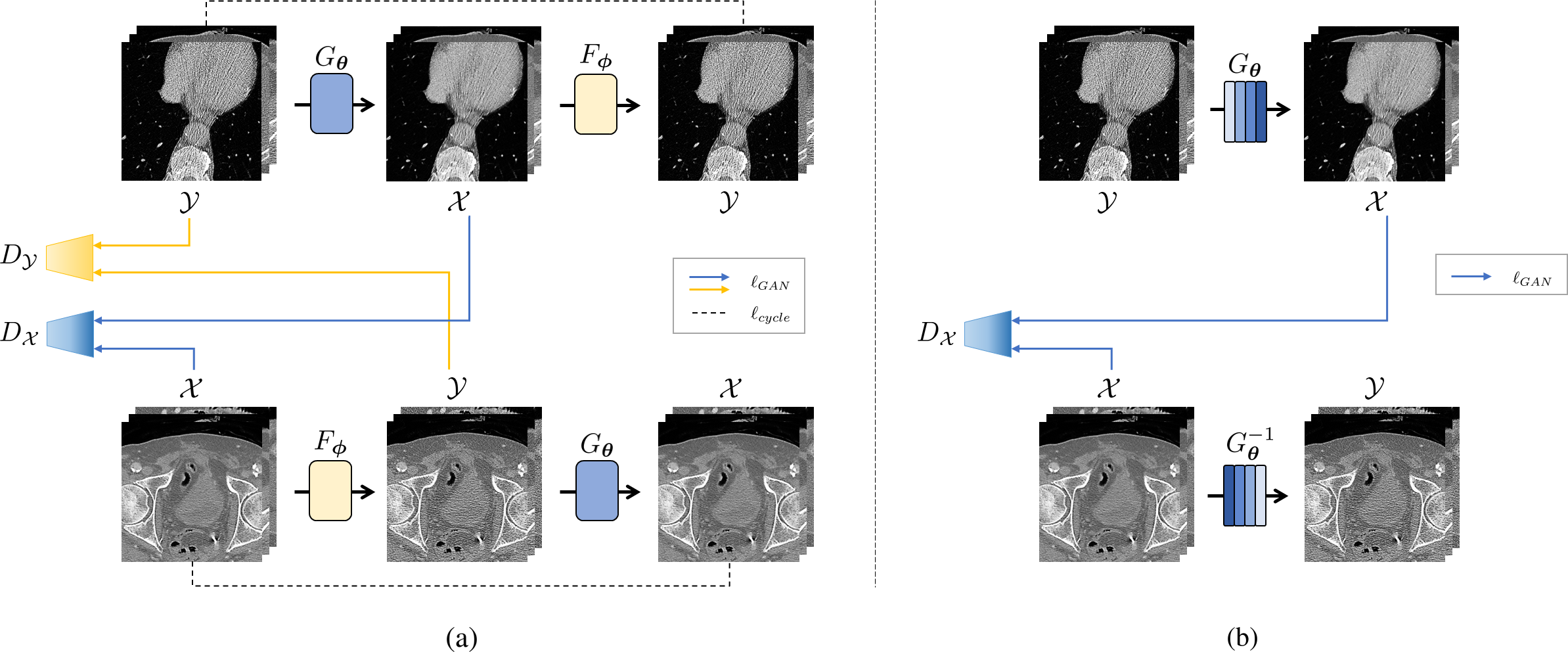}}
	\caption{(a) The architecture of the conventional CycleGAN for CT denoising. 
	Two generators and two discriminators are necessary.
(b) The architecture of our cycle-free CycleGAN with an invertible generator $G_{\theta}$. Only a single pair of a generator and a discriminator is needed.
		}
	\vspace{-0.5cm}
	\label{fig:cyclefreecycleGAN}
\end{figure*}

\section{Introduction}\label{sec:introduction}
\IEEEPARstart{X}{-ray} computed tomography (CT) is one of the most commonly used  medical imaging modalities
with the benefits of high-resolution imaging in a short scan time.
However, excessive X-ray radiation can potentially increase the incidence of cancer, so the
low-dose CT scanning has been extensively studied to minimize the radiation dose to the patients.
Unfortunately,  various artifacts appear on low-dose CT images, which significantly reduces the diagnostic values.

Recently,  deep learning approaches \cite{chen2017low, yang2018low, kang2017deep, kang2018deep, kang2019cycle, gu2021adain} have been
proposed for low-dose CT denoising with impressive performance.  
The majority of these works \cite{chen2017low, kang2017deep, yang2018low, kang2018deep} are based on supervised learning, where
the neural network is trained with paired low-dose CT (LDCT) and standard-dose CT (SDCT) images.
However, the simultaneous acquisition of images with low and high doses is often difficult, which also leads to increased radiation exposure to the subjects.

Accordingly, unsupervised learning approaches that do not require matched LDCT and SDCT images have become a major focus of research in the CT community \cite{kang2019cycle, you2019ct, kim2020unsupervised, gu2021adain}.
In particular, the authors in \cite{kang2019cycle, gu2021adain} proposed a CycleGAN approach \cite{zhu2017unpaired} for low-dose CT denoising that trains the networks with unpaired LDCT and SDCT images.
To enable such unpaired training, 
 two generators are necessary:  one for the forward mapping from LDCT to SDCT, and the other
  for inverse mapping from SDCT  to LDCT.  The cycle consistency is then enforced so that an image that goes through the successive application of forward and inverse mapping should revert to the original one.
   In fact, a recent theoretical study \cite{sim2020optimal} reveals that this CycleGAN architecture emerges as a dual formulation of an optimal
 transport problem where the statistical
 distances between the empirical and transported measures in both source and target domains are simultaneously minimized.

Although two generators are required for training, only the forward generator is used at the time of inference.
Nonetheless, the inverse mapping generator requires a similar number of learnable parameters and memory as the forward mapping, making the CycleGAN architecture inefficient. Furthermore, two generators and two discriminators should be trained simultaneously for convergence, which requires high-level
of skills and know-how for training.
To mitigate this problem,
Gu et al. \cite{gu2021adain}  proposed a tunable CycleGAN with adaptive instance normalization (AdaIN) \cite{huang2017arbitrary}.
The main idea is that a single generator can be switched to the forward or inverse generator
by simply changing the AdaIN code that is 
generated by a lightweight AdaIN code generator.
However, the architecture still requires two discriminators to distinguish the fake and real samples for LDCT and SDCT domains, the individual complexity of which is still as high as that of a generator.

Therefore, one of the ultimate goals of the CycleGAN study for low-dose CT noise removal would be to eliminate the unnecessary generator and discriminator while still maintaining the optimality of CycleGAN from the point of view of optimal transport.
Indeed, one of the most important contributions of this paper is to show that using an invertible generator architecture can automatically satisfy the cycle consistency term and completely remove one of the discriminators without affecting the CycleGAN framework {(see Fig. \ref{fig:cyclefreecycleGAN}).}

To meet the invertibility conditions, our generator is implemented using the coupling layers originally proposed for the normalizing flow \cite{dinh2014nice, dinh2016density}.
Then our generator is trained with just a single discriminator that distinguishes the fake SDCT from the real SDCT images.
To make the invertible generator sufficiently expressive for low-dose CT denoising, our network is trained using the wavelet residual
domain.
 Despite the lack of explicit cycle consistency, our algorithm maintains the optimality of CycleGAN and offers state-of-the-art noise removal with only 10\% of the trainable parameters compared to the conventional CycleGAN.
{Furthermore, the training time is two times faster.}
Since there is no explicit cycle consistency, our method is dubbed as cycle-free CycleGAN.

This paper is structured as follows.
Section~\ref{sec:related works} reviews the existing theory of normalizing flow.
Then,  Section~\ref{sec:theory} explains the mathematical theory behind our cycle-free CycleGAN.
Section \ref{sec:method} explains the implementation issues, training and analysis details, and our low-dose CT datasets. 
Experimental results using the various levels of low-dose CT denoising tasks are shown in Section \ref{sec:result}, which is followed by
 conclusion in Section \ref{sec:conclusion}.

\section{Related works}\label{sec:related works}

\subsection{Normalizing Flow}

Our method is inspired by the normalizing flow (NF) or invertible flow \cite{dinh2014nice, dinh2016density,kingma2018glow}, so we review them
briefly to highlight the similarities and differences from our work. 
However, the original derivation of NF \cite{kingma2013auto,dinh2014nice,rezende2015variational,dinh2016density,kingma2018glow} is difficult to reveal the link to our cycle-free CycleGAN, so here we present
a new derivation, which is inspired from $f$-VAE \cite{su2018f}.

Let $\Xc$ and $\Zc$ denote the ambient space and latent space, respectively.
In classical variational inference,  the model distribution $p_\thetab(\xb),\xb\in \Xc$ is obtained by combining a latent space distribution 
$p(\zb),\zb\in \Zc$ with a  family of conditional distributions $p_\thetab(\xb|\zb)$, which leads to an interesting lower bound:
\begin{align}\label{eq:vi}
\log p_\thetab(\xb) =\log\left(\int p_\thetab(\xb|\zb)p(\zb) d\zb \right) 
\geq -\ell_{ELBO}(\xb;\thetab,\phib) 
\end{align}
\begin{align}
&\ell_{ELBO}(\xb;\thetab,\phib)\notag\\
&:=-\int \log p_\thetab(\xb|\zb) q_\phib(\zb|\xb)d\zb +D_{KL}(q_\phib(\zb|\xb)||p(\zb))   \label{eq:ELBOloss}
\end{align}
where $D_{KL}(q||p)$ denotes the Kullback–Leibler (KL) divergence \cite{kingma2013auto}.
The lower bound in  \eqref{eq:vi}
is often called the evidence lower bound (ELBO) or the variational lower bound \cite{wainwright2008graphical}.
Then, the goal of the
 variational inference tries to find $\thetab$ and the posterior $q_\phib(\zb|\xb)$   that maximize the lower bound.

Among the various choices of posterior $q_\phib(\zb|\xb)$ for the ELBO,  the following form of the posterior is
most often used \cite{su2018f}:
\begin{align}\label{eq:encoder0}
q_\phib(\zb|\xb) = \int  \delta(\zb-F_\phib^\ub(\xb))r(\ub) d\ub
\end{align}
where $r(\ub)$ is zero-mean unit-variance Gaussian,
and $F_\phib^\ub: \xb\in \Xc\mapsto \Zc$ is the encoder function parameterized
by $\phib$ for a given input $\xb\in \Xc$ in addition to noise $\ub$.

For the given encoder in \eqref{eq:encoder0}, the ELBO loss in \eqref{eq:ELBOloss} can be simplified as \cite{su2018f}:
\begin{align}\label{eq:ELBOkey}
& \ell_{ELBO}(\xb;\thetab,\phib) \notag \\ 
:=& -\int  \log  p_\thetab(\xb|F_\phib^\ub(\xb))r(\ub) d\ub  \\
&+\int \log\left(\frac{r(\ub)}{ p(F_\phib^\ub(\xb))} \right)r(\ub) d\ub  \notag \\
&- \int \log \left|\det\left(\frac{\partial F_\phib^\ub(\xb)}{\partial \ub}\right)\right|  r(\ub)d\ub \notag
\end{align}
where the first term  in \eqref{eq:ELBOkey} is obtained from the first term in \eqref{eq:ELBOloss} 
that corresponds to the  likelihood term.
This can be represented as following
by assuming the Gaussian distribution:
\begin{align}\label{eq:encoder}
&-\int  \log  p_\thetab(\xb|F_\phib^\ub(\xb))r(\ub) d\ub \notag \\
= &\int \frac{1}{2} \|\xb- G_\thetab(F_\phib^\ub(\xb))\|^2 r(\ub) d\ub 
\end{align}
where $G_\thetab:\Zc\mapsto \Xc$ is the decoder function parameterized by $\thetab$.
Furthermore, VAE  chooses the following form of the encoder function:
 \begin{align}
 F_\phib^\ub(\xb) = F_\phib(\sigma \ub + \xb)
 \end{align}
 where $\sigma$ is the noise standard deviation,  which is often called the reparametrization trick  \cite{kingma2013auto}.
 Then, the normalizing flow further enforces that
   $F_\phib$ is an invertible function such that
\begin{align}\label{eq:GF}
G_\thetab = F_\phib^{-1} \ .
\end{align}
Thanks to the invertibility condition in \eqref{eq:GF},
 a very interesting phenomenon happens.
More specifically,  \eqref{eq:encoder} can be simplified as follows:
\begin{align}
& \frac{1}{2} \int \| \xb- G_\thetab(F_\phib^\ub(\xb))\|^2 r(\ub) d\ub\notag\\ 
=& \frac{1}{2} \int \| \xb- G_\thetab(F_\phib(\sigma \ub+\xb))\|^2 r(\ub)d\ub \notag\\
=& \frac{1}{2} \int \|\sigma \ub\|^2 r(\ub)d\ub = \frac{\sigma^2}{2} \label{eq:simple}
\end{align} 
which becomes a constant. Therefore, the decoder part is no more necessary 
in the parameter estimation.
Accordingly,  the ELBO loss in \eqref{eq:ELBOkey} can be simplified as
\begin{align}
&\ell_{flow}(\xb,\phib) \label{eq:NF} \\ 
:=&-\int \log\left({ p(F_\phib^\ub(\xb))} \right)r(\ub) d\ub  \notag \\
&- \int \log \left|\det\left(\frac{\partial F_\phib^\ub(\xb)}{\partial \ub}\right)\right|  r(\ub)d\ub\notag
\end{align}
where we have also removed $\int \log r(\ub) r(\ub)d\ub $ term since this is also a constant.
If we further assume the zero mean unit variance Gaussian measure for the latent space $\Zc$,
 \eqref{eq:NF} can be further simplified as
\begin{align}
&\ell_{flow}(\xb,\phib) \label{eq:flow} \\ 
=&\frac{1}{2}\int \| F_\phib(\sigma \ub + \xb)\|^2 r(\ub) d\ub  \notag \\
&- \int \log \left|\det\left(\frac{\partial F_\phib(\sigma \ub + \xb)}{\partial \ub}\right)\right|  r(\ub)d\ub\notag
\end{align}
which is the final loss function for NF.

Now the
main technical difficulty of minimizing the  loss function in \eqref{eq:flow}
arises from the last term which involves with complicated determinant calculation for
huge size matrix. Aside from the invertible network architecture that satisfies
\eqref{eq:GF},  
normalizing flow therefore focuses on encoder function  $F_\phib$  which is composed of a sequence of transformations:
\begin{align}
F_\phib (\ub) = (\hb_K \circ \hb_{K-1} \circ \cdots \circ  \hb_1)(\ub)
\end{align}
For this  encoder function,
the change of variable formula leads to
\begin{align}
 \log \left|\det\left(\frac{\partial F_\phib(\ub)}{\partial \ub}\right)\right| =\sum_{i=1}^K \log \left|\det\left(\frac{\partial \hb_i}{\partial \hb_{i-1}}\right)\right| 
 \end{align}
 where $\hb_0=\ub$.  Note that the complicated determinant computation in   \eqref{eq:flow}
 can be replaced by the relative easy computation for each step  \cite{dinh2014nice}. 

\section{Main Contribution}
\label{sec:theory}

\subsection{Derivation of cycle-free CycleGAN}

\begin{figure}[!t]
	\centering
	\begin{minipage}[b]{0.8\linewidth}
		\centerline{\includegraphics[width=\linewidth]{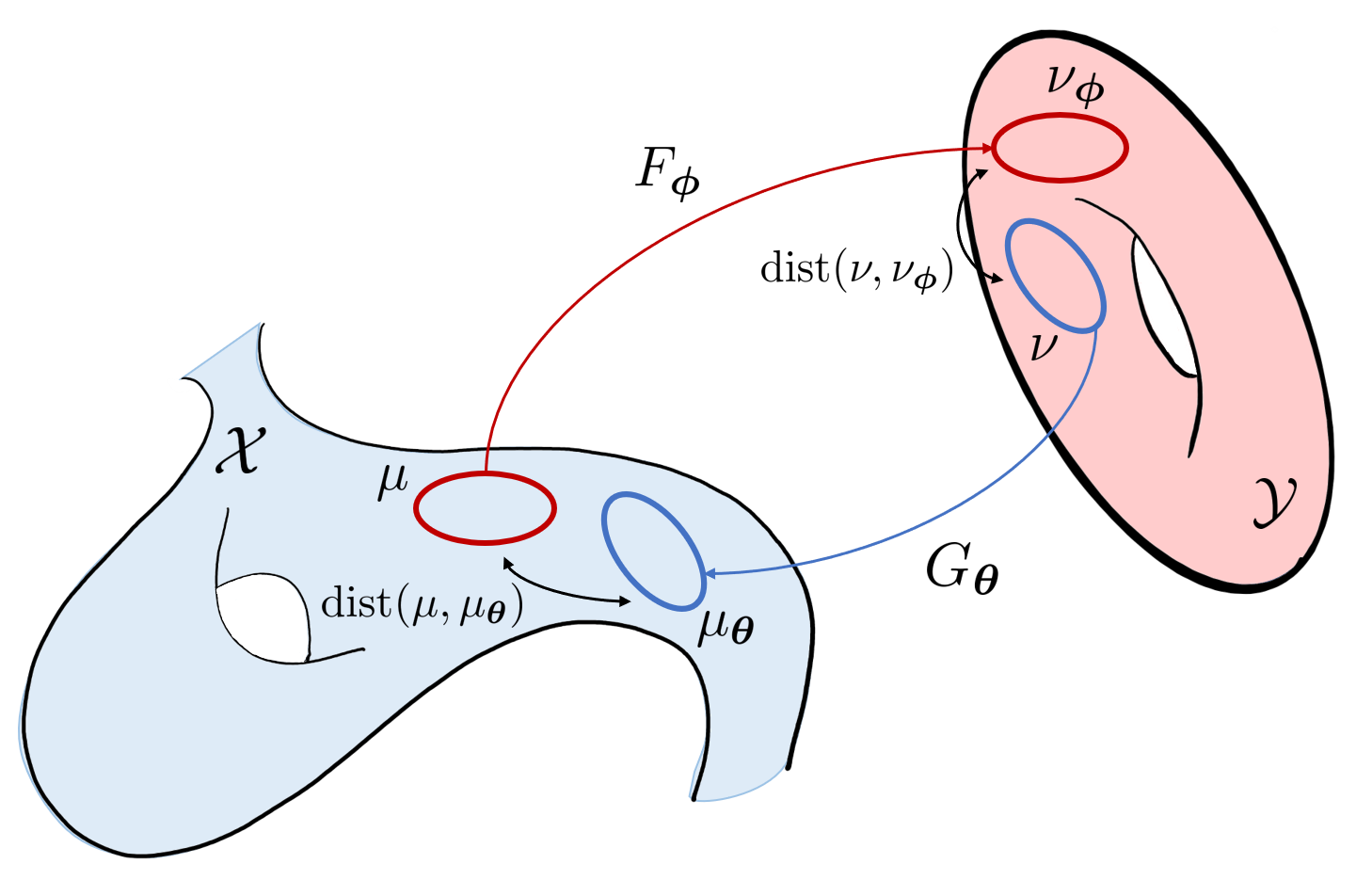}}
		\centerline{(a)}\medskip
	\end{minipage}
	\hspace{0.2cm}
	\begin{minipage}[b]{0.8\linewidth}
		\centerline{\includegraphics[width=\linewidth]{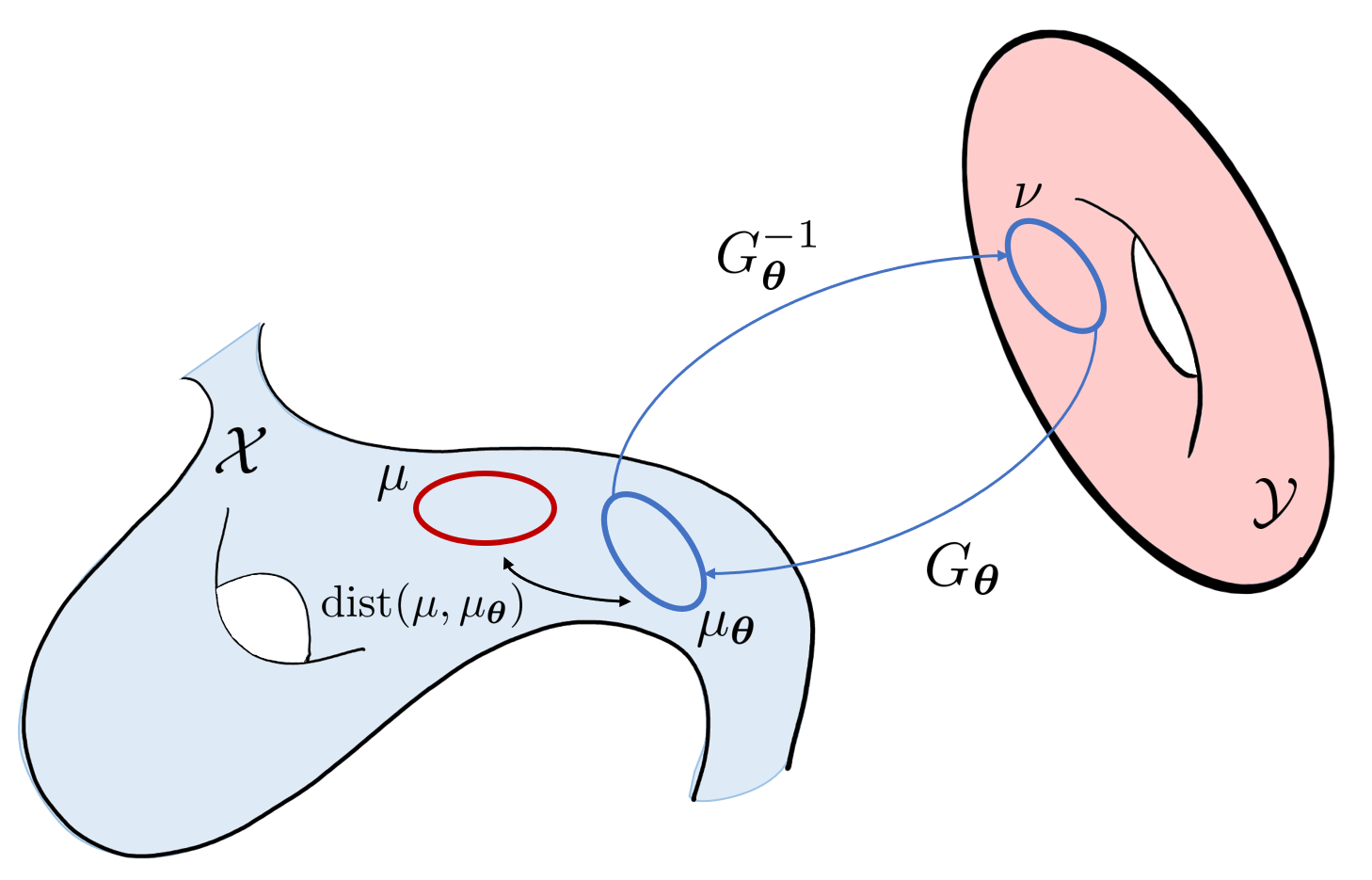}}
		\centerline{(b)}\medskip
	\end{minipage}
	\caption{Geometry of optimal transport: {(a)} Conventional CycleGAN minimizes
	two statistical distances $\mathrm{dist}(\mu,\mu_\thetab)$ and $\mathrm{dist}(\nu,\nu_\phib)$ simultaneously. 
	{(b) Cycle-free CycleGAN minimizes a single distance $\mathrm{dist}(\mu,\mu_\thetab)$ as the other distance can be automatically
	minimized due to the invertibility.}}
	\vspace{-0.5cm}
	\label{fig:optimaltransport}
\end{figure}

Similar to the normalizing flow which is concerned about the conversion between the latent space $\Zc$ and the ambient
space $\Xc$ for image generation, the main goal of CycleGAN is the image transfer
between two spaces, say $\Xc$ and $\Yc$.

Specifically, for the case of low-dose CT denoising, the target SDCT image space 
$\Xc$ is equipped with a probability measure $\mu$, whereas
the  LDCT image space  $\Yc$ is with a probability measure $\nu$ (see  Fig.~\ref{fig:optimaltransport}(a).)
Then, the goal of the CycleGAN is to transport the distribution $\nu$ of the LDCT to the SDCT distribution $\mu$ so that
the LDCT distribution can follow the SDCT distribution. It turns out that this is closely related to the optimal transport
\cite{villani2008optimal,peyre2019computational}.

In particular, the  transport from $(\Xc,\mu)$ to $(\Yc,\nu)$ is performed by the forward operator $F_\phib$,
 so that  $F_\phib$ ``pushes forward'' the measure $\mu$ in $\Xc$ to $\nu_\phib$ in the space $\Yc$ \cite{villani2008optimal,peyre2019computational}.
On the other hand, the mass transportation from the measure space $(\Yc,\nu)$ to another measure space $(\Xc,\mu)$ is done by a generator $G_\thetab: \Yc \mapsto \Xc$, i.e.
the generator $G_\thetab$ pushes forward the measure $\nu$ in $\Yc$ to a measure $\mu_\thetab$ in the target space $\Xc$. 
Then, the optimal transport map for unsupervised learning can be achieved by simultaneously minimizing the statistical distances  $\mathrm{dist}(\mu,\mu_\thetab)$ between $\mu$ and $\mu_\thetab$, and  $\mathrm{dist}(\nu,\nu_\phib)$ 
between $\nu$ and $\nu_\phib$. 

Although various forms of statistical distance could be used (for example, KL divergence in the case of VAE),  
in our prior work \cite{sim2020optimal} and its extension \cite{lim2020cyclegan, oh2020unpaired,cha2021unpaired,chung2021two,lee2020unsupervised}, we use the Wasserstein metric as the statistical distance. Then,  it was shown that the simultaneous
statistical distance minimization can be done by solving the following Kantorovich optimal transport problem:
\begin{align}\label{eq:unsupervised}
\inf\limits_{\pi \in \Pi(\mu,\nu)}\int_{\Xc\times \Yc}c(\xb,\yb;G_\thetab,F_\phib) d\pi(\xb,\yb) 
\end{align}
where $\Pi(\mu,\nu)$ refers to the set of the joint distributions with the margins $\mu$ and $\nu$,
and the transportation cost is defined by
\begin{align}\label{eq:ourc}
c(\xb,\yb;G_\thetab,F_\phib)= \|\xb-G_\thetab(\yb)\|+ \frac{1}{\beta}\|F_\phib(\xb)-\yb\|
\end{align}
where $\beta>0$ denotes some weighting parameter.
In particular, the role of $\beta$ in \eqref{eq:ourc} was originally studied in the context of $\beta$-CycleGAN \cite{lee2020unsupervised}.
In many inverse problems, additional regularization is often used.  For example, one could use the following \cite{cha2021unpaired,chung2021two}:
\begin{align}\label{eq:ourc2}
&c(\xb,\yb;G_\thetab,F_\phib)\notag\\
&= \|\xb-G_\thetab(\yb)\|+ \frac{1}{\beta}\|F_\phib(\xb)-\yb\| +\eta \|\yb-G_\thetab(\yb)\|
\end{align}
where $\eta>0$ is the regularization parameter and the last term penalizes the variation by the generator.
Note that the first two terms in \eqref{eq:ourc2} are computed by using both  $\xb$ and $\yb$,
whereas the last term is only with respect to $\yb$. From the optimal transport perspective, this makes a huge differences,
since the computation of the last term is trivial whereas the first term requires the dual formulation \cite{cha2021unpaired,chung2021two}.

One of the most important contributions of our companion paper \cite{sim2020optimal} is to show that
the primal formulation of the unsupervised learning in \eqref{eq:unsupervised}  with the transport cost \eqref{eq:ourc2}
can be represented by:
\begin{eqnarray}\label{eq:OTcycleGAN}
\min_{\thetab,\phib}\max_{\psi,\varphi}\ell(G_\thetab,F_\phib;\psi,\varphi)
\end{eqnarray}
where 
\begin{align*}
&\ell(G_\thetab,F_\phib;\psi,\varphi)\\
&:=  \lambda \ell_{cycle}(G_\thetab,F_\phib) +\ell_{GAN}(G_\thetab,F_\phib;\psi,\varphi) + \eta \ell_\Yc(G_\thetab)
\end{align*}
where $\lambda>0$ is the hyper-parameter, and  the cycle-consistency term is given by
\begin{align}\label{eq:cycleloss} 
\ell_{cycle}(G_\thetab,F_\phib)  =& \int_{\Xc} \|\xb- G_\thetab(F_\phib(\xb)) \|  d\mu(\xb) \\
&+\frac{1}{\beta}\int_{\Yc} \|\yb-F_\phib(G_\thetab(\yb))\|   d\nu(\yb) \notag 
\end{align}
and 
\begin{align}
&\ell_{GAN}(G_\thetab,F_\phib;\psi,\varphi)  \label{eq:Disc} \\
=&\max_{\varphi\in L^1(\Xc)}\int_\Xc \varphi(\xb)  d\mu(\xb) - \int_\Yc \varphi(G_\thetab(\yb))d\nu(\yb) \notag \\
 & + \max_{\psi\in L^{\frac{1}{\beta}}(\Yc)}\int_{\Yc} \psi(\yb)  d\nu(\yb) - \int_\Xc \psi(F_\phib(\xb))  d\mu(\xb) \notag
\end{align}
where  $L^\kappa(D)$ denotes the space of  $\kappa$-Lipschitz functions with the domain $D$,
and
\begin{align}\label{eq:Yc}
\ell_\Yc(G_\thetab) = \int \|\yb-G_\thetab(\yb)\| d\nu(\yb)
\end{align}
{To make the paper self-contained, see Appendix for the detailed derivation.}

Similar to the key  simplification step \eqref{eq:simple}  in NF,
a very interesting thing happens if we use an invertible generator in \eqref{eq:GF} for the CycleGAN training.
The following proposition is our key result.

\begin{proposition}
Suppose that the generators are invertible, i.e. $G_\thetab=F_\phib^{-1}$ and $F_\phib$ is a $\kappa$-Lipschiz function.
Then, the CycleGAN problem in \eqref{eq:OTcycleGAN} with the transport cost given by \eqref{eq:ourc2}  with $\beta=\kappa$, can be equivalently represented by
\begin{eqnarray}
\min_{\thetab}\max_{\varphi}\ell(G_\thetab;\varphi)
\end{eqnarray}
where 
\begin{eqnarray}
\ell(G_\thetab;\varphi):=  2\ell_{GAN}(G_\thetab;\varphi) + \eta \ell_\Yc(G_\thetab)
\end{eqnarray}
Here, $\ell_\Yc(G_\thetab)$ is defined in \eqref{eq:Yc} and
\begin{align*}
&\ell_{GAN}(G_\thetab;\varphi)  &\\
&=\max_{\varphi\in L^1(\Xc)}\int_\Xc \varphi(\xb)  d\mu(\xb) - \int_\Yc \varphi(G_\thetab(\yb))d\nu(\yb) 
\end{align*}
\end{proposition}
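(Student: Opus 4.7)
The plan is to substitute $G_\thetab = F_\phib^{-1}$ into the three components of the CycleGAN objective \eqref{eq:OTcycleGAN} and track what each collapses to: (i) the cycle-consistency piece vanishes identically, (ii) the two discriminator sups in \eqref{eq:Disc} coalesce into a single one with a factor of two, and (iii) the $\ell_\Yc$ piece is untouched. This is the direct analogue of the NF calculation \eqref{eq:simple}, where invertibility turned one integral in the ELBO into a constant and thereby removed the decoder; here the same trick is applied on the dual side to eliminate the second discriminator instead.

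The cycle term is immediate: $G_\thetab(F_\phib(\xb)) = \xb$ and $F_\phib(G_\thetab(\yb)) = \yb$ make both integrands in \eqref{eq:cycleloss} zero pointwise, so $\lambda \ell_{cycle}(G_\thetab, F_\phib) = 0$ regardless of $\lambda$. The $\ell_\Yc$ term depends only on $G_\thetab$ and $\nu$, hence carries through unchanged.

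The main step is identifying the $\psi$-sup in \eqref{eq:Disc} with the $\varphi$-sup. Since $\mu_\thetab = (G_\thetab)_*\nu$, invertibility gives $\nu = (F_\phib)_*\mu_\thetab$, so by the push-forward change of variables
\begin{align*}
\int_\Yc \psi(\yb)\, d\nu(\yb) = \int_\Xc (\psi \circ F_\phib)(\xb)\, d\mu_\thetab(\xb).
\end{align*}
Setting $\tilde\varphi := \psi \circ F_\phib$ rewrites the $\psi$-sup as $\sup_\psi \int_\Xc \tilde\varphi\,(d\mu_\thetab - d\mu)$, and replacing $\psi$ by $-\psi$ (still admissible) flips the sign into the form of the $\varphi$-term. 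Because $\psi$ is $(1/\beta)$-Lipschitz and $F_\phib$ is $\kappa$-Lipschitz with $\beta = \kappa$, $\tilde\varphi$ lies in $L^1(\Xc)$, aligning the admissible class for the pulled-back sup with that for $\varphi$. Assembling (i)--(iii) then yields $\ell = 2\ell_{GAN}(G_\thetab;\varphi) + \eta \ell_\Yc(G_\thetab)$, as claimed.

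The main obstacle I foresee is the bookkeeping for these Lipschitz classes: the image $\{\psi \circ F_\phib : \psi \in L^{1/\beta}(\Yc)\}$ is contained in $L^1(\Xc)$, but in general it is a strict subset, so equality of the two sups is not automatic. For an exact match one really needs $F_\phib^{-1}$ to be $(1/\kappa)$-Lipschitz as well, which together with the $\kappa$-upper bound forces $F_\phib$ to be bi-Lipschitz with matching constants. I would resolve this either by strengthening the hypothesis to bi-Lipschitz with constant $\kappa$, or by appealing to the coupling-layer architecture used later in the paper (for which a matching lower Lipschitz bound is available by construction), so that the two Lipschitz function classes genuinely coincide after composition with $F_\phib$.
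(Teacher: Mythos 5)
Your route is essentially the paper's: substitute $G_\thetab=F_\phib^{-1}$, observe that $\ell_{cycle}$ in \eqref{eq:cycleloss} vanishes pointwise and $\ell_\Yc$ is untouched, pull the $\psi$-term back through $F_\phib$ (your pushforward identity is exactly the paper's rewriting $\int_\Yc\psi\,d\nu=\int\psi(F_\phib(G_\thetab(\yb)))\,d\nu(\yb)$), flip the sign since the class is symmetric, and check that $\psi\circ F_\phib$ is $1$-Lipschitz because $\kappa=\beta$. The only place you diverge is the final step you flag as an obstacle: the paper does not stop at the containment $\Phi=\{\psi\circ F_\phib\}\subseteq L^1(\Xc)$, but argues tightness of the upper bound \eqref{eq:upper} by taking the maximizer $\varphi^*$ over all $1$-Lipschitz functions and constructing $\psi:=\varphi^*\circ G_\thetab$, which is well defined on all of $\Yc$ because invertibility makes $G_\thetab$ onto $\Xc$, so that $\psi\circ F_\phib=\varphi^*$ and the two suprema coincide. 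Your reservation is nonetheless well taken: for this constructed $\psi$ to be admissible it must lie in $L^{1/\beta}(\Yc)$, i.e.\ one needs $G_\thetab=F_\phib^{-1}$ to be $1/\kappa$-Lipschitz, which is exactly the matching lower Lipschitz bound (bi-Lipschitz with constant $\kappa$) you ask for; the paper's statement and proof leave this admissibility check implicit. So your proposal is the same argument, executed slightly more cautiously -- if you add the bi-Lipschitz hypothesis (or the architecture-based bound you mention) and then run the paper's explicit tightness construction $\psi=\varphi^*\circ G_\thetab$ instead of only proposing to strengthen the hypothesis, you recover the stated equality in full.
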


\begin{proof}
First, the invertibility condition  in \eqref{eq:GF} implies that  $F_\phib(G_\thetab(\yb))=\yb$ and $G_\thetab(F_\phib(\xb))=\xb$
 so that  we can easily see that
$\ell_{cycle}(G_\thetab,F_\phib)$ in \eqref{eq:cycleloss} vanishes. 
Second, thanks to the invertibility condition in \eqref{eq:GF}, we have
\begin{align}
 &\max_{\psi\in L^{\frac{1}{\beta}}(\Yc)}\int_{\Yc} \psi(\yb)  d\nu(\yb) - \int_\Xc \psi(F_\phib(\xb))  d\mu(\xb) \notag \\
 &=\max_{\psi\in L^{\frac{1}{\beta}}(\Yc)}\int_{\Xc} \psi(F_\phib(G_\thetab(\yb)))d\nu(\yb) - \int_\Xc \psi(F_\phib(\xb))  d\mu(\xb) \notag \\
&= \max_{\varphi'_\phib\in \Phi} \int_\Yc \varphi'_\phib(G_\thetab(\yb))d\nu(\yb) - \int_\Xc \varphi'_\phib(\xb)  d\mu(\xb) \notag\\
&= \max_{\varphi_\phib\in \Phi}\int_\Xc \varphi_\phib(\xb)  d\mu(\xb) - \int_\Yc \varphi_\phib(G_\thetab(\yb))d\nu(\yb) 
\end{align}
where the set $\Phi$  is defined by
\begin{align}\label{eq:Phi}
\Phi = \{\varphi | \varphi = \psi \circ F_\phib,~\psi \in L^{\frac{1}{\beta}}(\Yc)\}
\end{align}
and the last equality comes from that $\varphi=-\varphi'$, as $-\varphi_\phib\in \Phi$ for $\varphi_\Phib\in \Phi$.
Furthermore, since $F_\phib$ is  a $\kappa$-Lischitz function, we have
\begin{align*}
\|\varphi_\phib(\xb)-\varphi_\phib(\xb')\| & = \|\psi(F_\phib(\xb))-\psi(F_\phib(\xb'))\| \\
&\overset{\mathrm{(a)}}{\leq}\frac{1}{\beta}\|F_\phib(\xb)-F_\phib(\xb')\|\\
&\overset{\mathrm{(b)}}{\leq}\frac{\kappa}{\beta}\|\xb-\xb'\|=\|\xb-\xb'\|
\end{align*}
where the inequality (a) comes from the $1/\beta$-Lipschitz condition of $\psi$ and (b) comes from that $F_\phi$ is a $\kappa$-Lischitz function, where $\kappa=\beta$ by the assumption.
Accordingly, $\varphi_\phib$ is 1-Lipschitz function.
Therefore, we can obtain the following upper bound
\begin{align}\label{eq:upper}
&\max_{\varphi_\phib\in \Phi}\int_\Xc \varphi_\phib(\xb)  d\mu(\xb) - \int_\Yc \varphi_\phib(G_\thetab(\yb))d\nu(\yb) \notag\\
&{\leq} \max_{\varphi\in L^1(\Xc)}\int_\Xc \varphi(\xb)  d\mu(\xb) - \int_\Yc \varphi(G_\thetab(\yb))d\nu(\yb) 
\end{align}
by extending the function space from $\Phi$ to all 1-Lipschitz functions.
Next, we will show that the upper bound in \eqref{eq:upper} is tight.
Suppose that $\varphi^*$ is the maximizer for \eqref{eq:upper}. To show that the bound is tight, we need to show the
existence of $\psi$ such that
\begin{align*}
\varphi^*(\xb)=\psi(F_\phib(\xb)),\quad \forall \xb\in \Xc
\end{align*}
Thanks to the invertibility condition \eqref{eq:GF}, we can always find $\yb\in \Yc$ such that $\xb = G_\thetab(\yb)$ for all $\xb\in \Xc$.
Accordingly,
\begin{align*}
\psi(\yb) =\psi(F_\phib(G_\thetab(\yb)))=\varphi^*(G_\thetab(\yb)),
\end{align*}
which achieves the upper bound. Therefore, we have
\begin{align*}
 &\max_{\psi\in L^{\frac{1}{\beta}}(\Yc)}\int_{\Yc} \psi(\yb)  d\nu(\yb) - \int_\Xc \psi(F_\phib(\xb))  d\mu(\xb) \notag \\
 &=\max_{\varphi\in L^1(\Xc)}\int_\Xc \varphi(\xb)  d\mu(\xb) - \int_\Yc \varphi(G_\thetab(\yb))d\nu(\yb) 
\end{align*}
and $\ell_{GAN}(G,F;\psi,\varphi)=2\ell_{GAN}(G;\varphi)$.
This concludes the proof.
\end{proof}



Compared to the NF, our cycle-free CycleGAN has several advantages.
First, in NF, the latent space $\Zc$ is usually assumed to be Gaussian distribution so that
the main focus is an image generation from noises in $\Zc$ to the ambient space $\Xc$. In order to apply NF to image translation between
$\Yc$ to $\Xc$ domain, we need to implement two NF networks: one for conversion from $\Yc$ to $\Zc$, and the other
from $\Zc$ to $\Xc$. {During this image translation via the latent space, our empirical results shows that the information loss are present due to the restriction to the Gaussian latent variable.}
On the other hand, in our cycle-free CycleGAN, the space $\Xc$ and $\Yc$ can be any empirical distributions.

Additionally, our method has very interesting geometric interpretation.
By replacing the forward operator $F_\phib$ with the inverse of invertible generator $G^{-1}_\thetab$, 
two statistical distance minimization problem in the original CycleGAN in Fig.~\ref{fig:optimaltransport}(a) can be replaced
by the single statistical distance minimization problem as shown in Fig.~\ref{fig:optimaltransport}(b).

\begin{figure}[!bt] 	
	\centerline{\includegraphics[width=\linewidth]{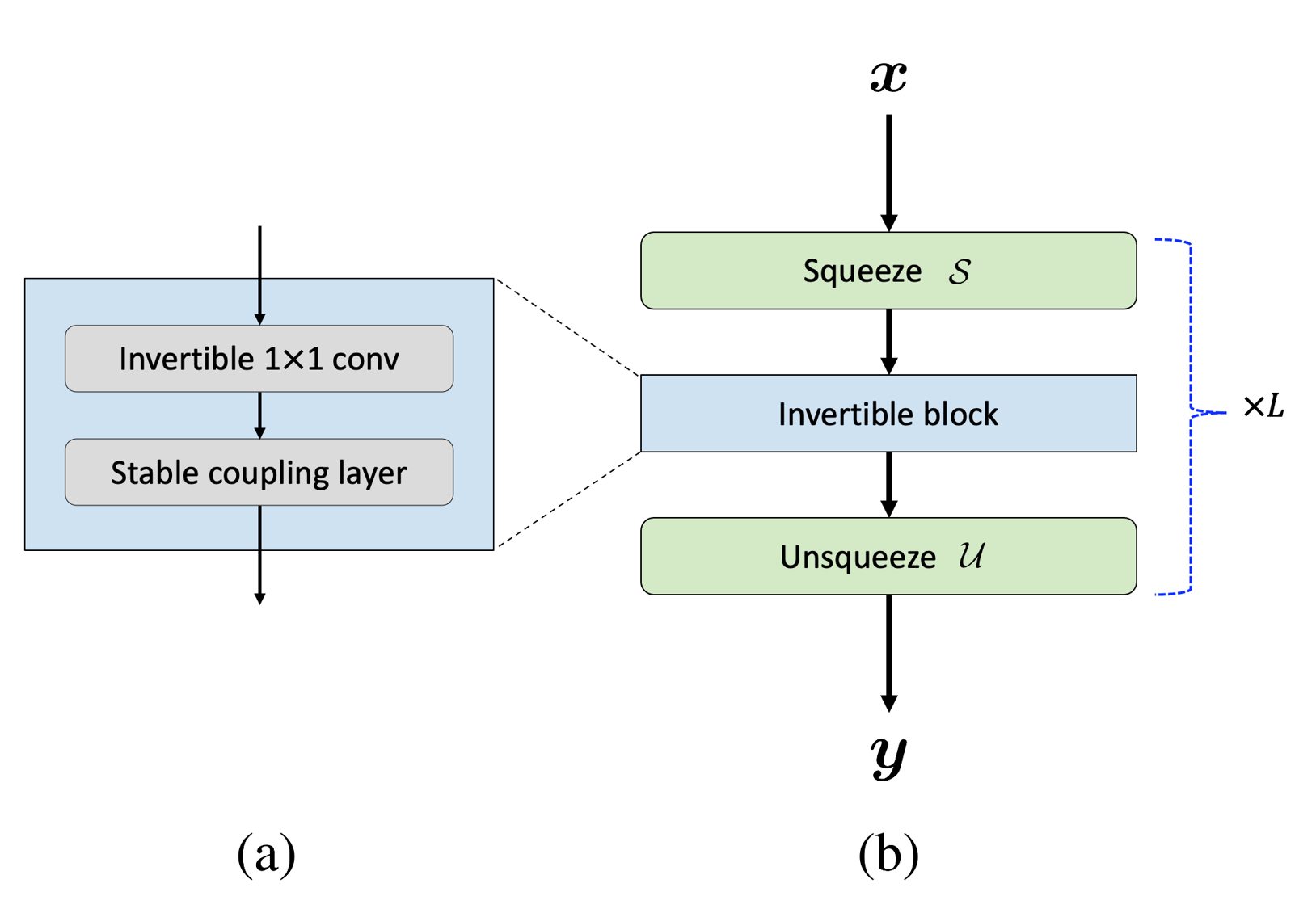}}
	\caption{
    	{The architecture of our (a) Invertible block and (b)  invertible generator $G_{\thetab}$.
        Thanks to its invertibility, we can reverse its operation to get input $\xb$ from output $\yb$ by reversing the operations.}
        }
	\label{fig:flowofgenerator}
\end{figure}

%


\subsection{Invertible Generator}\label{sec:implementation}

%
%
%

Various architecture has been proposed to construct invertible neural networks for flow-based generative models \cite{dinh2014nice, dinh2016density, kingma2018glow}.
For example, the Nonlinear Independent Component Estimation (NICE) \cite{dinh2014nice} is based on
additive coupling layer that leads to volume-preserving invertible mapping.
Later, the method is further extended to the affine coupling layer, which increases the expressiveness of the model \cite{dinh2016density}.

However, this architecture imposes some constraints on the functions that the
network can represent: for instance, it can only represent volume-preserving mappings. Follow-up 
works \cite{dinh2016density,kingma2018glow} addressed this limitation by introducing a new reversible transformation.
More specifically, the authors in  \cite{dinh2016density} proposed a coupling layer
using real-valued non-volume preserving (Real NVP) transformations.
On the other hand, Kingma et al. \cite{kingma2018glow} proposed an invertible 1 $\times$ 1 convolution as a generalization of a permutation operation, which significantly improves the image generation quality of the flow-based generative models.


In the following, we explain specific components of invertible blocks that are used in our method.
Specifically, our network architecture is shown in  Fig. \ref{fig:flowofgenerator}, {which is composed of $L$ repetition of
squeeze/unsqueeze block interleaved with invertible 1$\times$1 convolution and stable additive coupling layers.}
The detailed explanation follows.

\subsubsection{Squeeze and Unsqueeze operation} 

{Squeeze} operation $\Sc$ splits input image $\xb$ into four sub-images   which are arranged along
channel direction as shown in Fig. \ref{fig:squeezeunsqueeze}.
Mathematically, this can be written by
\begin{align*}
\xb_{1:4}:=[\xb_1,\xb_2,\xb_3,\xb_4] =\Sc(\xb)
\end{align*}
Squeeze operation is essential to build the coupling layer, which becomes evident soon. 
{Unsqueeze} operation $\Uc$, denoted by
\begin{align*}
\xb = \Uc(\xb_{1:4}) \ ,
\end{align*}
 then rearranges separated channels  into one image  
as an inverse operation of squeeze operation (see Fig. \ref{fig:squeezeunsqueeze}). 
 This operation is applied using the output of the coupling layer, so that unsqueezed output  maintains the same spatial dimension of the input image $\xb$.

\begin{figure}[!h]
	\centerline{\includegraphics[width=0.9\linewidth]{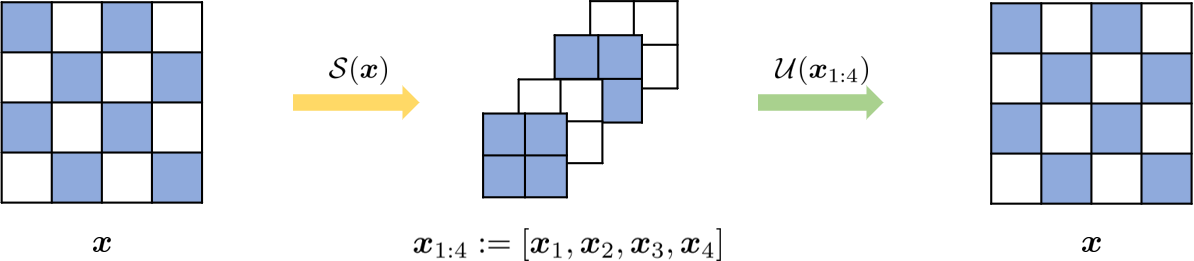}}
	\caption{Squeeze $\Sc$ and unsqueeze operation $\Uc$.
	}
	\label{fig:squeezeunsqueeze}
\end{figure}

\subsubsection{Invertible 1$\times$1 convolution}
The squeeze operation decomposes the input into four components along the channel dimension.
With the resulting fixed channel arrangement, only limited spatial information passes through the neural network. 
Therefore, random shuffling and reversing the order of channel dimension \cite{dinh2014nice, dinh2016density} were proposed. 
On the other hand, Generative Flow with Invertible 1$\times$1 Convolutions (Glow) \cite{kingma2018glow} proposed an
invertible 1$\times$1 convolution with an equal number of input and output channels as a generalization of permutation operation with learnable parameters. 

Mathematically, 1$\times$1 convolution $\Cc$ can be represented by multiplying a matrix $\Wb\in \Rd^{4\times 4}$ as follows:
\begin{align}\label{eq:Cc}
\Cc \left(\xb_{1:4}\right) =  \xb_{1:4}\Wb
\end{align}
which is illustrated in Fig. \ref{fig:1by1conv}.
By multiplying a fully populated matrix, the channel-wise separated input information can be mixed together so that
the subsequent operation can be applied more efficiently.
Then, the corresponding inversion operation $\Cc^{-1}$ can be written by \cite{kingma2018glow}
\begin{align}\label{eq:invconv}
\Cc^{-1} \left(\yb_{1:4}\right) =  \yb_{1:4}\Wb^{-1}
\end{align}
if $\Wb$ is invertible.


\begin{figure}[!ht]
	\centerline{\includegraphics[width=0.9\linewidth]{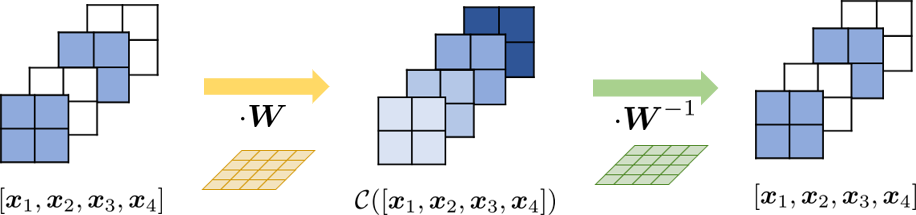}}
	\caption{Invertible 1$\times$1 convolution and its inverse.
	}
	\label{fig:1by1conv}
\end{figure}

\subsubsection{Stable Additive coupling layer}

 {Coupling layer} is the essential component that gives invertibility  but also provides expressiveness
 of the neural network.
The additive coupling layer in NICE \cite{dinh2014nice} is based on the even and odd decomposition
of the sequence, after which neural networks are applied in an alternating manner.

Tomczak et al \cite{tomczak2020general} further extended the additive
coupling layer to general coupling layer where input image can split into four channel
blocks, and neural networks are applied at every step.
%
%
%
By applying the general invertible transformation, we can handle separated input more efficiently.

%

Specifically, the stable coupling layer is given by
\begin{align}\label{eq:forward}
\yb_1 &= \xb_1+\Fc_1\left([\xb_2,\xb_3,\xb_4]\right) \notag\\
\yb_2 &= \xb_2+\Fc_2\left([\yb_1,\xb_3,\xb_4]\right) \notag\\
\yb_3 &= \xb_3+\Fc_3\left([\yb_1,\yb_2,\xb_4]\right) \notag\\
\yb_4 &= \xb_4+\Fc_4\left([\yb_1,\yb_2,\yb_3]\right) 
\end{align}
where $\Fc_i(\cdot), i=1,\cdots, 4$ are neural networks.
Then, the block inversion can be readily done by
\begin{align}
\xb_4 &= \yb_4-\Fc_4\left([\yb_1,\yb_2,\yb_3]\right) \notag\\
\xb_3 &= \yb_3-\Fc_3\left([\yb_1,\yb_2,\xb_4]\right) \notag\\
\xb_2 &= \yb_2-\Fc_2\left([\yb_1,\xb_3,\xb_4]\right) \notag\\
\xb_1 &= \yb_1-\Fc_1\left([\xb_2,\xb_3,\xb_4]\right)
\end{align}
For example, additive operation $\yb_1 = \xb_1+\Fc_1\left([\xb_2,\xb_3,\xb_4]\right)$ and its inverse operation $\xb_1 = \yb_1-\Fc_1\left([\xb_2,\xb_3,\xb_4]\right)$ are shown as Fig. \ref{fig:generaladditive}.

\begin{figure}[!ht]
	\centering
	\begin{minipage}[b]{0.8\linewidth}
		\centerline{\includegraphics[width=\linewidth]{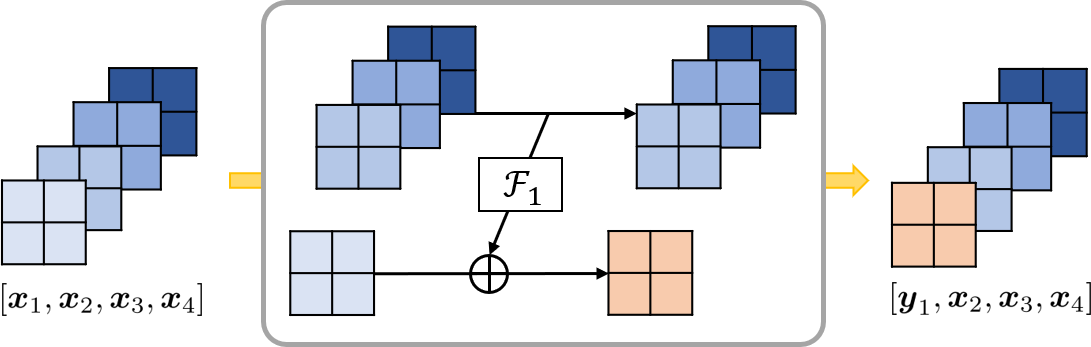}}
		\centerline{(a)}\medskip
	\end{minipage}
	\begin{minipage}[b]{0.8\linewidth}
		\centerline{\includegraphics[width=\linewidth]{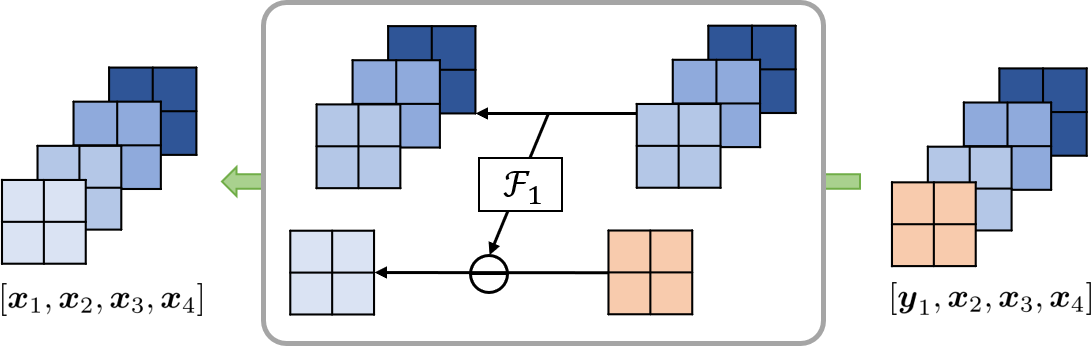}}
		\centerline{(b)}\medskip
	\end{minipage}
	\caption{Implementation of (a)  single additive operation $\yb_1 = \xb_1+\Fc_1\left([\xb_2,\xb_3,\xb_4]\right)$. and (b)  its inverse operation $\xb_1 = \yb_1-\Fc_1\left([\xb_2,\xb_3,\xb_4]\right)$.}
	\label{fig:generaladditive}
\end{figure}

\subsubsection{Lipschitz constant computation}

It is easy to see that the Jacobian of the stable coupling layer has unit determinant \cite{dinh2014nice}.
In fact, among the aforementioned modules in the invertible networks,
only module that does not have unit determinant is the 1$\times$1 convolution layer. Specifically,
  the log-determinant of the step \eqref{eq:Cc}  is determined by that of $\Wb$ \cite{kingma2018glow}:
\begin{align}
 \log \left|\det\left(\frac{d \Cc(\xb_{1:4})}{d \xb_{1:4}}\right)\right| = 
 \log \left| \det(\textbf{W}) \right| 
 \end{align}
 Similarly, the Lipschitz constant for the invertible generator can be easily determined by the matrix norm of $\Wb$.

%

\subsection{Wavelet Residual Learning}


Unlike the image generation from noises,
one of the important observations in the image denoising is that
the noisy and clean images share structural similarities.
Accordingly, rather than learning all components of the images, the authors in \cite{song2020unsupervised,gu2021adain}
proposed wavelet residual domain learning approach, and we follow the same procedure.

Specifically, as shown in Fig. \ref{fig:waveletresidual}(a), wavelet decomposition separates high-frequency component and low-frequency components,
then by nulling only the low-frequency (LL) component at the last level decomposition, we can obtain
the wavelet residual images that contain high-frequency components.
Then, as shown in Fig. \ref{fig:waveletresidual}(b), our network is trained using only high-frequency components.
This makes the network handles CT noise components much easier, because most of the CT noises are concentrated in high frequency and
the common low-pass images are not processed by neural networks.



\begin{figure}[!ht]
	\centering
    \begin{minipage}[b]{\linewidth}
		\centerline{\includegraphics[width=\linewidth]{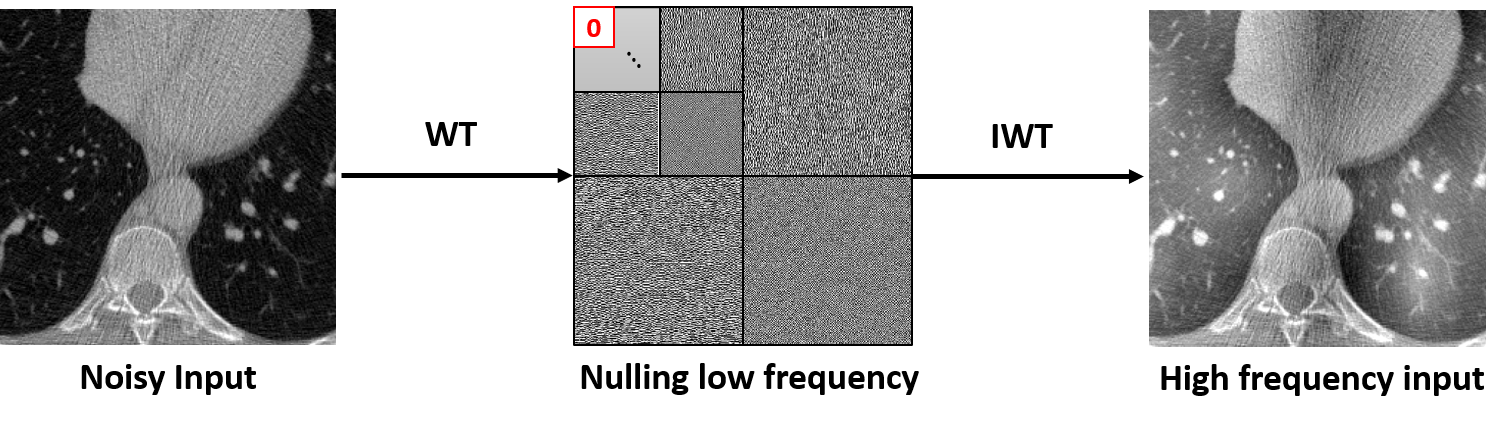}}
		\centerline{(a)}\medskip
	\end{minipage}
	\hspace{0.2cm}
	\begin{minipage}[b]{\linewidth}
		\centerline{\includegraphics[width=\linewidth]{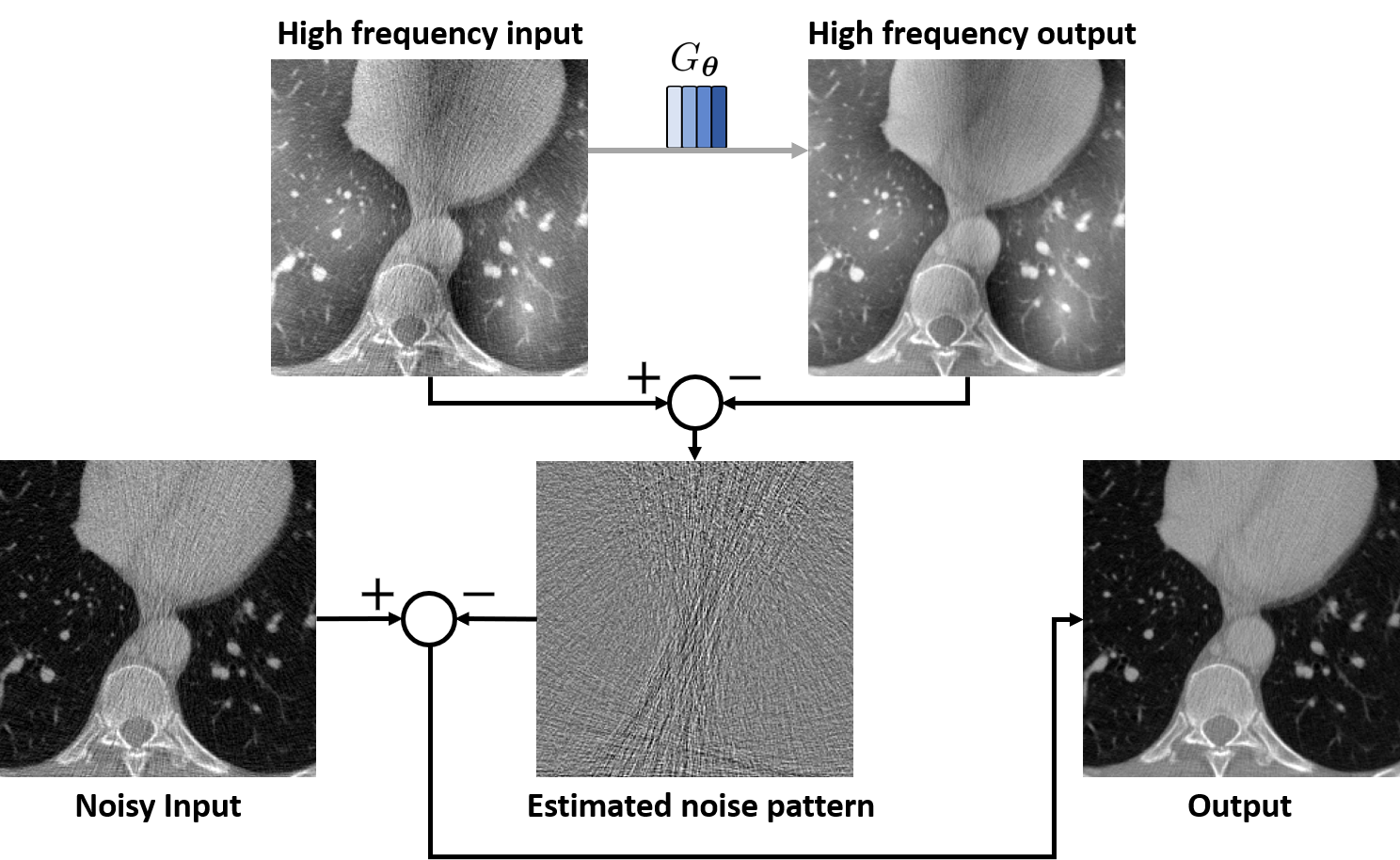}}
		\centerline{(b)}\medskip
	\end{minipage}
	\caption{(a) Generation of  wavelet residual image by nulling out the lowest band of the wavelet decomposition. 
	(b) Our network is trained using wavelet residual. Noise pattern is acquired, and then
	the final image is obtained by subtracting the noise pattern.
	}
	\label{fig:waveletresidual}
\end{figure}

\section{Method}\label{sec:method}
\subsection{Dataset}

To verify the denoising performance of our framework, we use two datasets, one for the quantitative analysis and the other  for the qualitative analysis. For quantitative analysis, we use paired low-dose and standard-dose CT image dataset which was used for study by Kang et al.\cite{kang2017deep}.
Specifically, the data are abdominal CT projection data from the AAPM 2016 Low Dose CT Grand Challenge.
For qualitative experiments, we use unpaired 20\% dose cardiac multiphase CT scan dataset which was used for study by Kang et al.\cite{kang2019cycle}.
The details are as follows.

\subsubsection{AAPM CT dataset} 
AAPM CT dataset is a reconstructed CT image dataset from abdominal CT projection data in the AAPM 2016 Low Dose CT Grand Challenge, which was used for study by Kang et al.\cite{kang2017deep}. Total 10 patients' data were obtained after approval by the institutional review board of the Mayo Clinic. CT images with the size of $512\times512$ were reconstructed using a conventional filtered backprojection algorithm. Poisson noise was inserted into the projection data to make noise level corresponded to 25\% of the standard-dose. As the low-dose CT image data were simulated based on standard-dose CT images, they are paired dataset.
For the training, every value of the dataset is converted into Hounsfield unit [HU] and the value lower than -1000 HU is truncated to -1000 HU.
Then, we divide the dataset into 4000 to normalize all data values between [-1, 1]. 
To train our network, we use 3839 CT images, while the other 421 images were used to test our network.

\subsubsection{20\% dose Multiphase Cardiac CT scan dataset}
The 20\% dose cardiac multiphase CT scan dataset was acquired from 50 CT scans of mitral value prolapse patients and 50 CT scans of coronary artery disease patients. {The dataset was collected at the University
of Ulsan College of Medicine and used for study by Kang et al.\cite{kang2019cycle} and Gu et al.\cite{gu2021adain}.}
The detailed information of CT scan protocol is described in previous reports \cite{koo2014demonstration, yang2015stress}.
Electrocardiography (ECG)-gated cardiac CT scanning with second-generation dual-source CT scanner was performed.
For the low-dose CT scan, the tube current is reduced to 20\% of the standard-dose CT scan.
For the training, every value of the dataset is converted into Hounsfield unit [HU] and the value lower than -1024 HU is truncated to -1024 HU.
After that, we divide the dataset into 4096 to normalize all data values between [-1, 1]. 
To train our network, we use 4684 CT images, while the other 772 images were used to test our model. 


\begin{figure}[!hbt]
    \centering
    \begin{minipage}[b]{0.48\linewidth}
		\centerline{\includegraphics[width=\linewidth]{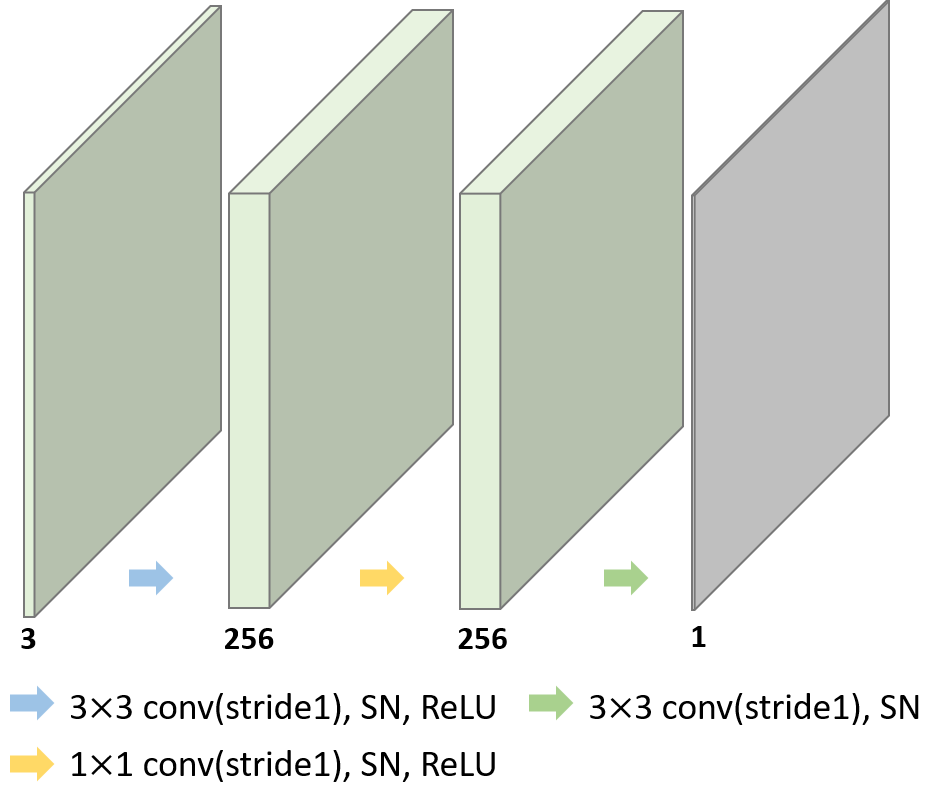}}
		\centerline{(a)}\medskip
	\end{minipage}
	\begin{minipage}[b]{0.5\linewidth}
		\centerline{\includegraphics[width=\linewidth]{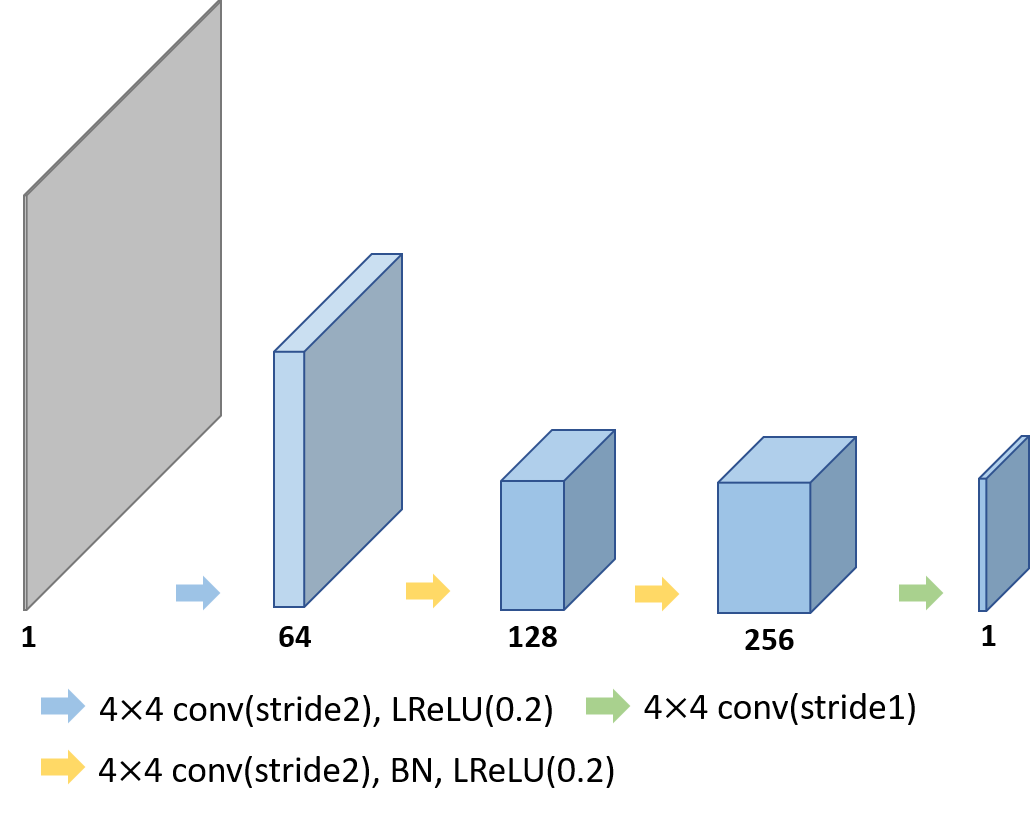}}
		\centerline{(b)}\medskip
	\end{minipage}
	\hspace{-0.2cm}
	\caption{Network architecture for (a) $\Fc_i(\cdot)$ in the coupling layer,  and (b) PatchGAN discriminator.}
	\label{fig:NN and D}
\end{figure}

\begin{figure*}[ht!]
    \centering
    \begin{minipage}[b]{0.57\linewidth}
		\centerline{\includegraphics[width=\linewidth]{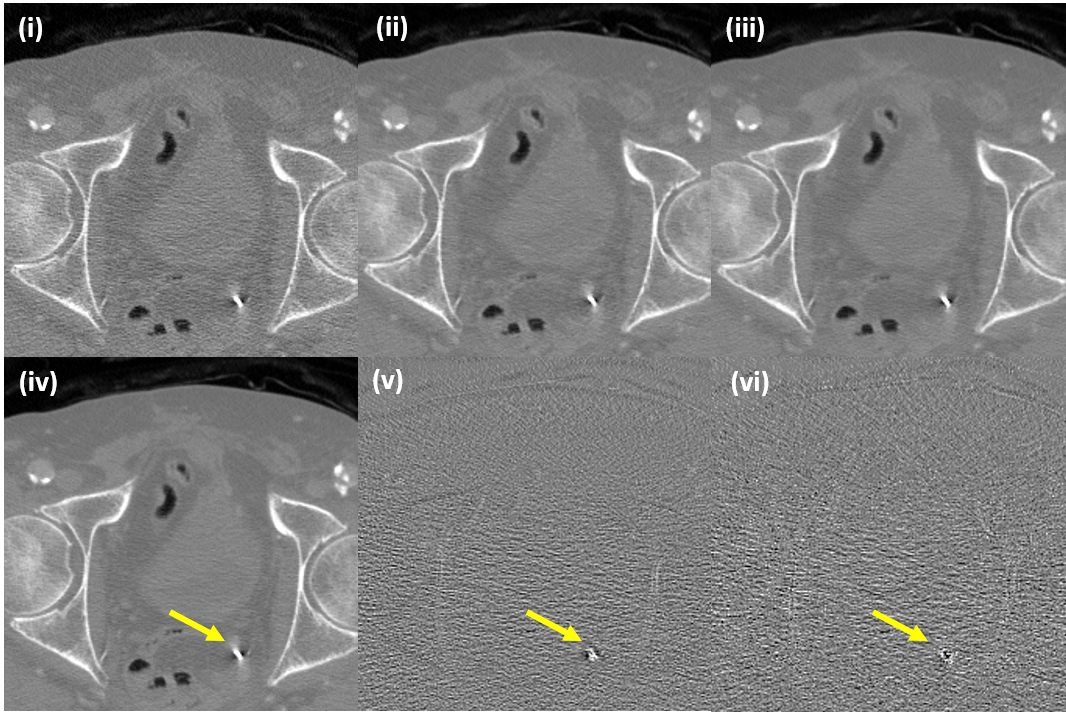}}
		\centerline{(a) Forward mapping}\medskip
	\end{minipage}
	\hspace{0.2cm}
	\begin{minipage}[b]{0.38\linewidth}
		\centerline{\includegraphics[width=\linewidth]{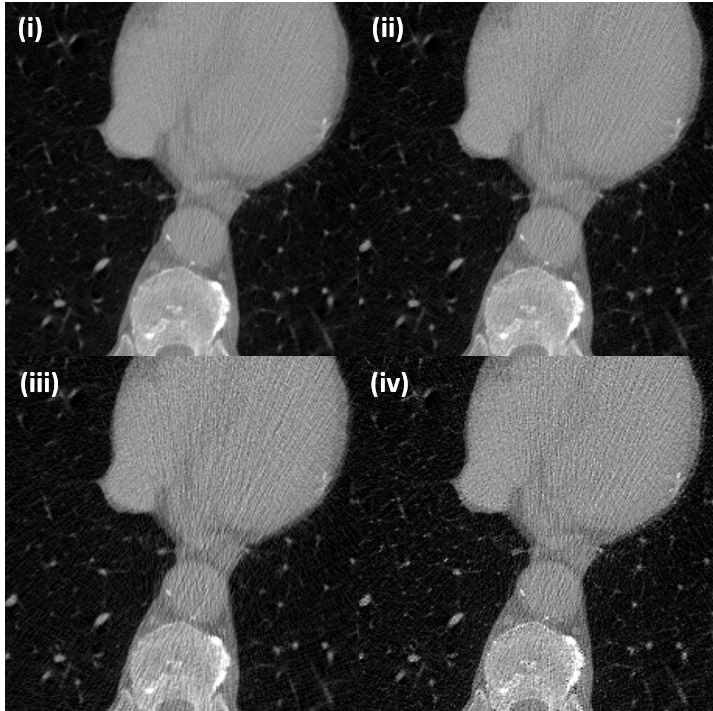}}
		\centerline{(b) Inverse mapping}\medskip
	\end{minipage}
	\caption{(a) Denoising results of AAPM data: (i) input low-dose image,  and denoising results from (ii) the AdaIN-based tunable CycleGAN, (iii) the proposed CycleGAN with an invertible generator, and (iv) target standard-dose image. (v)(vi) Difference images between (i) and the results (ii)(iii), respectively.
		(b) Synthetic noise generation  results:  (i) input standard-dose image, and synthetic low-dose image from (ii) the AdaIN-based tunable CycleGAN, (iii) the proposed CycleGAN with an invertible generator, and (iv) the target low-dose image.
		The intensity window of CT image is (-1000, 1000) [HU] and the intensity window of difference is (-200, 200) [HU].
		}
	\label{fig:AAPMresults}
\end{figure*}

\subsection{Implementation Details}
The invertible generator is constructed based on the flow of the invertible generator with $L=4$ as shown in Fig. \ref{fig:flowofgenerator}.
To extract the wavelet residual,  we use daub3 wavelets, and  
the level of wavelet decomposition was set to 6 for all datasets. 

{
The architecture of the neural network in the coupling layer (see \eqref{eq:forward}) is shown in Fig. \ref{fig:NN and D}(a). 
Basically, the architecture is composed of three convolution layers with spectral normalization\cite{miyato2018spectral, behrmann2021understanding}
followed by the multi-channel input single-channel channel output  convolution. 
The first and last convolution layer use 3$\times$3 kernel with stride of 1 and the second convolution layer uses 1$\times$1 kernel with stride of 1.
And the latent feature map channel size is 256.
Zero-padding is applied for the first and last convolution layer so that
at each stage, the height and width of the feature map are equal to the previous feature map.
}

The discriminator is constructed based on a PatchGAN architecture \cite{isola2017image}. The overall architecture of the discriminator is shown in Fig. \ref{fig:NN and D}(b), which is based on the PatchGAN discriminator composed of four convolution layers than five convolution layers. First two convolution layers use stride of 2, and the rest of the convolution layers use stride of 1. After the first and last convolution layer, we do not apply batch normalization. Except for the last convolution layer,  we apply LeakyReLU with a slope of 0.2 after the batch normalization. At the first convolution layer, which does not have batch normalization, LeakyReLU was applied after the convolution layer. The discriminator loss is calculated with the LSGAN loss \cite{mao2017least}.

%

For all datasets, {the network was trained with $\eta=10$ for $\ell_{\Yc}(G_{\thetab})$ in \ref{eq:Yc}}, using ADAM optimizer \cite{kingma2014adam} with $\beta_1 = 0.9$, $\beta_2 = 0.999$,  $\epsilon=1e^{-8}$,  and the mini-batch size of 1. The learning rate was initialized to $1\times 10^{-4}$ and halved in every 50,000 iterations. We trained network for 150,000 iterations on NVIDIA GeForce RTX 2080 Ti. Also, our code was implemented with Pytorch v1.6.0 and CUDA 10.1.

\begin{figure}[!hbt]
    \centerline{\includegraphics[width=\linewidth]{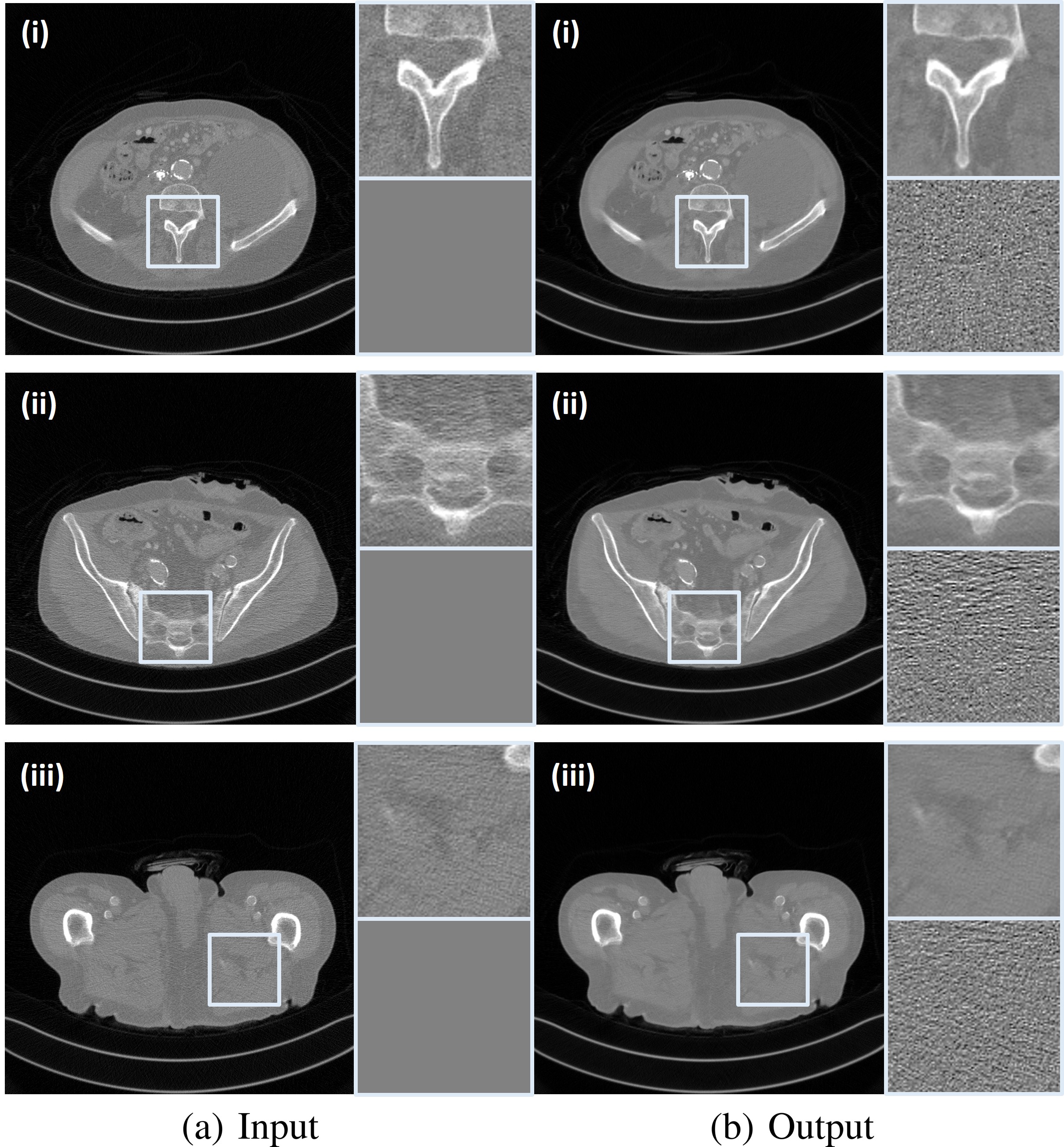}}
	\caption{{Denoising results for AAPM data using the proposed method. (a) Input low-dose CT images (i-iii), and (b) denoising results from the proposed cycle-free CycleGAN  (i-iii). 
	The intensity window of CT image is (-1000, 1000) [HU] and the intensity window of difference between input and output image is (-200, 200) [HU].
		}
		}
	\vspace*{-0.5cm}
	\label{fig:AAPMqualitatives}
\end{figure}

\begin{figure*}[!ht]
    \centering
    \begin{minipage}[b]{0.57\linewidth}
		\centerline{\includegraphics[width=\linewidth]{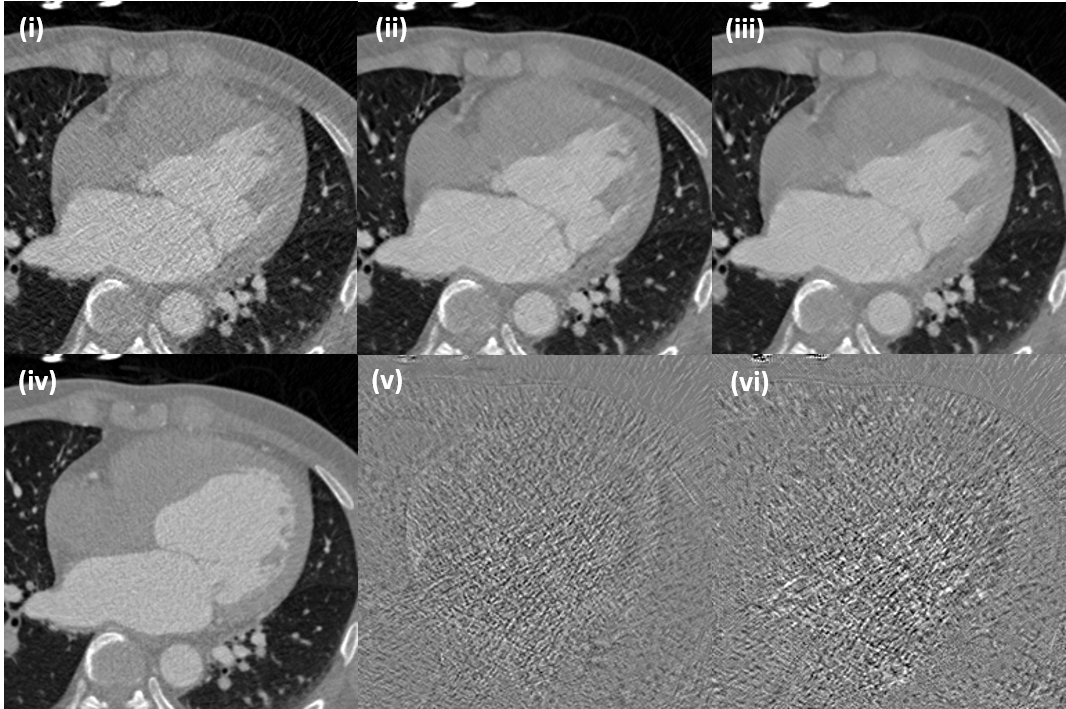}}
		\centerline{(a) Forward mapping}\medskip
	\end{minipage}
	\hspace{0.2cm}
	\begin{minipage}[b]{0.38\linewidth}
		\centerline{\includegraphics[width=\linewidth]{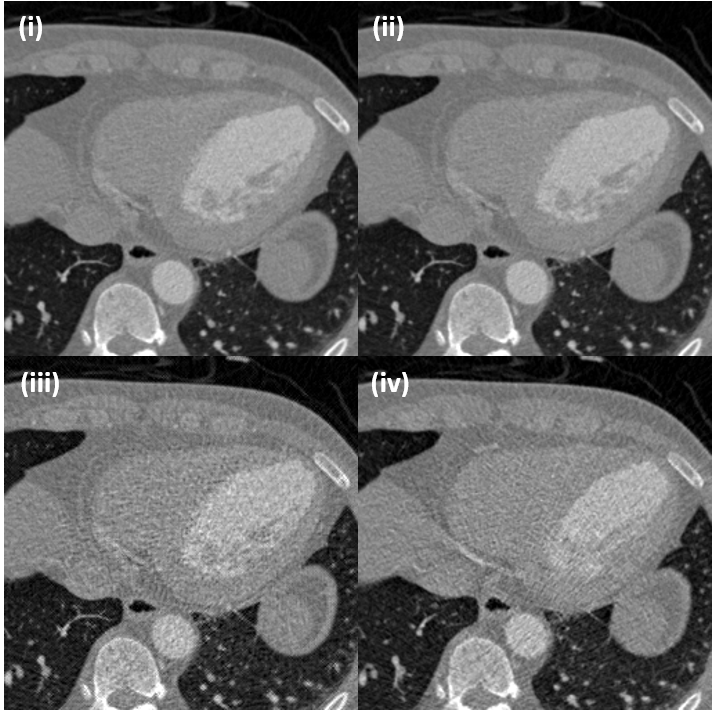}}
		\centerline{(b) Inverse mapping}\medskip
	\end{minipage}
	\caption{{(a) Denoising results of 20\% dose CT data. (i) Input low-dose image, denoising result from (ii) the AdaIN-based tunable CycleGAN, (iii) the proposed CycleGAN with an invertible generator, and (iv) target standard-dose image. (v-vi) Difference between (i) and the results (ii-iii).
	(b) Synthetic noise generation results:  (i) input standard-dose image, and synthetic low-dose image from (ii) the AdaIN-based tunable CycleGAN, (iii) the proposed CycleGAN with an invertible generator, and (iv) the target low-dose image. Note that the target is not perfectly aligned with the input, since
	there are no perfectly aligned high-dose images in in vivo experiments.
		The intensity window of CT image is (-1024, 1024) [HU] and the intensity window of difference is (-200, 200) [HU]
		}
		}
	\vspace*{-0.5cm}
	\label{fig:20doseresults}
\end{figure*}

\subsection{Quantitative Metrics}
For quantitative experiment analysis, we use the peak signal-to-noise ratio (PSNR) and the structural similarity index metric (SSIM) \cite{wang2004image}.
The PSNR is defined as follows:
\begin{eqnarray}
PSNR(x,y) = 20\log_{10}\frac{MAX_{\xb}}
{\lVert \xb-\yb \rVert}_{2} \ ,
\end{eqnarray}
where $\xb$ is the input image, $\yb$ is target image, and $MAX_{\xb}$ is possible maximum pixel value of image $\xb$.

The SSIM is defined as follows:
\begin{eqnarray}
SSIM(\xb,\yb) = \frac{(2\mu_{x}\mu_{y} + c_1)(2\sigma_{x,y} + c_2)}
{(\mu_{x}^2 + \mu_{y}^2 + c_1)(\sigma_{x}^2 + \sigma_{y}^2 + c_2)} \ ,
\end{eqnarray}
where $\mu$ is the average of image, $\sigma$ is the variance of image, $c_1 = (k_1L)^2$, $c_2 = (k_2L)^2$, $k_1 = 0.01$, $k_2 = 0.03$ as in the original paper\cite{wang2004image}.

\subsection{Comparative Methods}
We compared our method with the existing unsupervised LDCT denoising networks \cite{zhu2017unpaired, gu2021adain}.
For AAPM dataset, we compared our network performance with the conventional CycleGAN\cite{zhu2017unpaired} whose the generator is based on U-net\cite{ronneberger2015u} architecture. We also compared with AdaIN-based tunable CycleGAN \cite{gu2021adain},
which shows state-of-the-art performace for LDCT denoising.
For unpaired 20\%  dose CT scan datasets, we compare our method with AdaIN-based tunable CycleGAN.

For the training of the conventional CycleGAN, the images are cropped into $128\times128$ patches, the learning rate is initialized to $1\times 10^{-3}$, the network trained for 200 epochs, and the other training settings are set to the same as the proposed network training.
In AdaIN-based tunable CycleGAN, the same patch size was used, and the learning rate is initialized to $2\times 10^{-4}$, the network trained for 200 epochs, and the other training settings are the same as the proposed network training. For both comparative methods,  we used PatchGAN consisting of five convolution layers for discriminator architecture. 

\section{Experimental Results}\label{sec:result}
\subsection{AAPM CT dataset}

For the AAPM CT dataset, we first compare the noise reduction performance quantitatively with the conventional CycleGAN and AdaIN-based CycleGAN based on PSNR and SSIM.
As shown in Table \ref{tab:table3}, our network shows the highest PSNR and comparable SSIM values.

\begin{table}[h!]
  \begin{center}
    \caption{Quantitative results for AAPM dataset}
    \label{tab:table3}
    \resizebox{0.35\textwidth}{!}{
      \begin{tabular}{l|c|c}
        \toprule 
        \textbf{Network} & \textbf{PSNR} & \textbf{SSIM}\\
        \midrule 
        LDCT input & 30.468 & 0.695\\
        Conventional CycleGAN & 34.621 & 0.818\\
        AdaIN-based CycleGAN & 34.801 & \textbf{0.824}\\
        Proposed & \textbf{34.940} & 0.821\\
        \bottomrule 
        \end{tabular}
    }
  \end{center}
\end{table}

\begin{table*}[h!]
  \begin{center}
    \caption{Comparison of the network complexity in terms of trainable parameters}
    \label{tab:table4}
    \resizebox{0.65\textwidth}{!}{
      \begin{tabular}{c c|c c|c c}
        \toprule 
        \multicolumn{2}{c|}{Conventional CycleGAN}  &\multicolumn{2}{c|}{AdaIN CycleGAN\cite{gu2021adain}}  &  \multicolumn{2}{c}{Proposed} \\
        \midrule 
        Network & \# of Parameters & Network & \# of Parameters & Network & \# of Parameters\\
        \midrule 
        \midrule 
        $G_{\thetab}$ & 6,251,392 & $G_{\thetab}$ & 5,900,865 & $G_{\thetab}$ & 1,204,320\\
        $F_{\phib}$ & 6,251,392 & $F$ & 274,560 & - & -\\
        $D_{\xb}$ & 2,766,209 & $D_{\xb}$ & 2,766,209 & $D_{\xb}$ & 662,401\\
        $D_{\yb}$ & 2,766,209 & $D_{\yb}$ & 2,766,209 & - & - \\
        \midrule 
        Total & 18,035,202 & Total & 11,707,843 & Total & \textbf{1,866,721}\\
        \bottomrule 
        \end{tabular}
    }
  \end{center}
\end{table*}

\begin{table}[hbt!]
  \begin{center}
    \caption{Ablation study for invertible block components}
    \label{tab:table5}
    \resizebox{0.4\textwidth}{!}{
      \begin{tabular}{c c c c}
        \toprule 
        \multicolumn{2}{c}{Invertible block component}  &  \multirow{2}{*}{\textbf{PSNR}} & \multirow{2}{*}{\textbf{SSIM}}\\
        \cmidrule(){1-2}
        \textbf{Coupling layer} & \textbf{1$\times$1 conv} & & \\
        \midrule 
        \midrule 
        \cmark & \xmark & 29.931 & 0.691\\
        \xmark & \cmark & 30.706 & 0.704\\
        \cmark & \cmark & \textbf{34.940} & \textbf{0.821}\\
        \bottomrule 
        \end{tabular}
    }
  \end{center}
\end{table}


Fig. \ref{fig:AAPMresults}(a) shows representative
denoising results by various methods.
The resulting images are cropped at $ 256\times256 $ to more accurately visualize the denoising performance. The intensity of the CT images shown is (-1000, 1000) [HU] and the difference is (-200, 200) [HU].
Our CycleGAN with an invertible generator removes more noise components than the AdaIN-based CycleGAN method without losing any information. {As can be seen from the Fig.~\ref{fig:AAPMresults}(a)}, the proposed network (Fig.~\ref{fig:AAPMresults}(a-iii)) removes noise components around high-intensity metals more evenly than AdaIN-based CycleGAN(Fig.~\ref{fig:AAPMresults}(a-ii)).

To verify that the invertible generator can also properly perform the inverse mapping, we provide an inversely mapped output from  SDCT images {as shown in Fig.~\ref{fig:AAPMresults}(b).}
The resulting images are cropped at $256\times256$ resolution in order to more accurately visualize the improved quality. The intensity of the CT images shown is (-1000, 1000) [HU].
Even if the proposed method does not apply discriminator or loss for inverse mapping, the proposed method adds a reasonable level of noise to the SDCT, which makes the output of the LDCT appear closer than the AdaIN-based CycleGAN.


{In Fig.~\ref{fig:AAPMqualitatives}, there are three representative denoising results by the proposed CycleGAN with an invertible generator to verify the noise reduction performance qualitatively. 
The gray boxes  in the input low-dose CT images and the denoising result images are enlarged in order to more accurately visualize the noise reduction performance, and their difference from the input are also visualized.
The difference images clearly show the removed noise components. 
As can be seen from Fig.~\ref{fig:AAPMqualitatives}, the proposed method removes noise components evenly without any structural information loss. Therefore, it distinguishes bone and each soft tissue more clear.
}

\subsection{20\%  dose cardiac CT scan dataset}

The dataset does not have paired reference data so that quantitative comparison using PSNR and SSIM is not possible. 
Therefore, we qualitatively compared denoising performance. 
Intensity of the CT images is shown  (-1024, 1024) [HU], whereas the difference images are shown in (-200, 200) [HU] for 20\% dose.

{Fig. \ref{fig:20doseresults}(a) shows
denoising result by various methods.} 
Note that the target is not perfectly aligned with the input, since
	there are no perfectly aligned high-dose images in in vivo experiments.
Still, the visual inspection and the difference images from the input shows that 
our cycle-free CycleGAN with an invertible generator removes various noise components more uniformly than the AdaIN-based CycleGAN method
without incurring any structural distortion.
{In Fig. \ref{fig:20doseresults}(b),} SDCT images can be successfully converted to noisy images.
%
Even if the proposed method does not apply any discriminator or loss for inverse mapping, our method adds proper noise level to SDCT. 

{Also, Fig.~\ref{fig:20dosequalitatives} shows three representative  denoising results by the proposed cycle-free CycleGAN with an invertible generator. The gray boxes  in the low-dose inputs and denoised outputs are enlarged. The proposed method properly removes noise components from input low-dose CT images, so that each soft tissue in the resulting denoised images is distinguished clearly.
}

\section{Discussion}


As shown in Table \ref{tab:table4}, our network  uses only 10\% parameters of conventional CycleGAN, and 15\% parameters of AdaIN-based CycleGAN. 
This is because 
 we use a single generator and a single discriminator thanks to the invertibility. In addition, the networks in stable coupling layer are relatively light. Accordingly, the discriminator also requires relatively few parameters.
 
 Thanks to its efficient parameter requirement and memory consumption, cycle-free CycleGAN also reduced training time.
While training input image size with 256$\times$256 resolution, cycle-free CycleGAN shows training time of 12.2 iterations per second. However, our comparative model AdaIN-based tunable CycleGAN shows training time of 6.8 iterations per second. Accordingly, the training time is twice faster than that of AdaIN-based CycleGAN.

To investigate the optimality of our network architecture, we performed ablation studies.
In particular, we investigate the effect of invertible 1$\times$1 convolution and stable coupling layer, as these are the critical parts for the network design.
As shown in Table \ref{tab:table5}, both modules are critical. In particular, the results
without 1$\times$1 convolution layers show inferior performance than
using the invertible 1$\times$1 convolution.

\begin{figure}[!hbt]
    \centerline{\includegraphics[width=\linewidth]{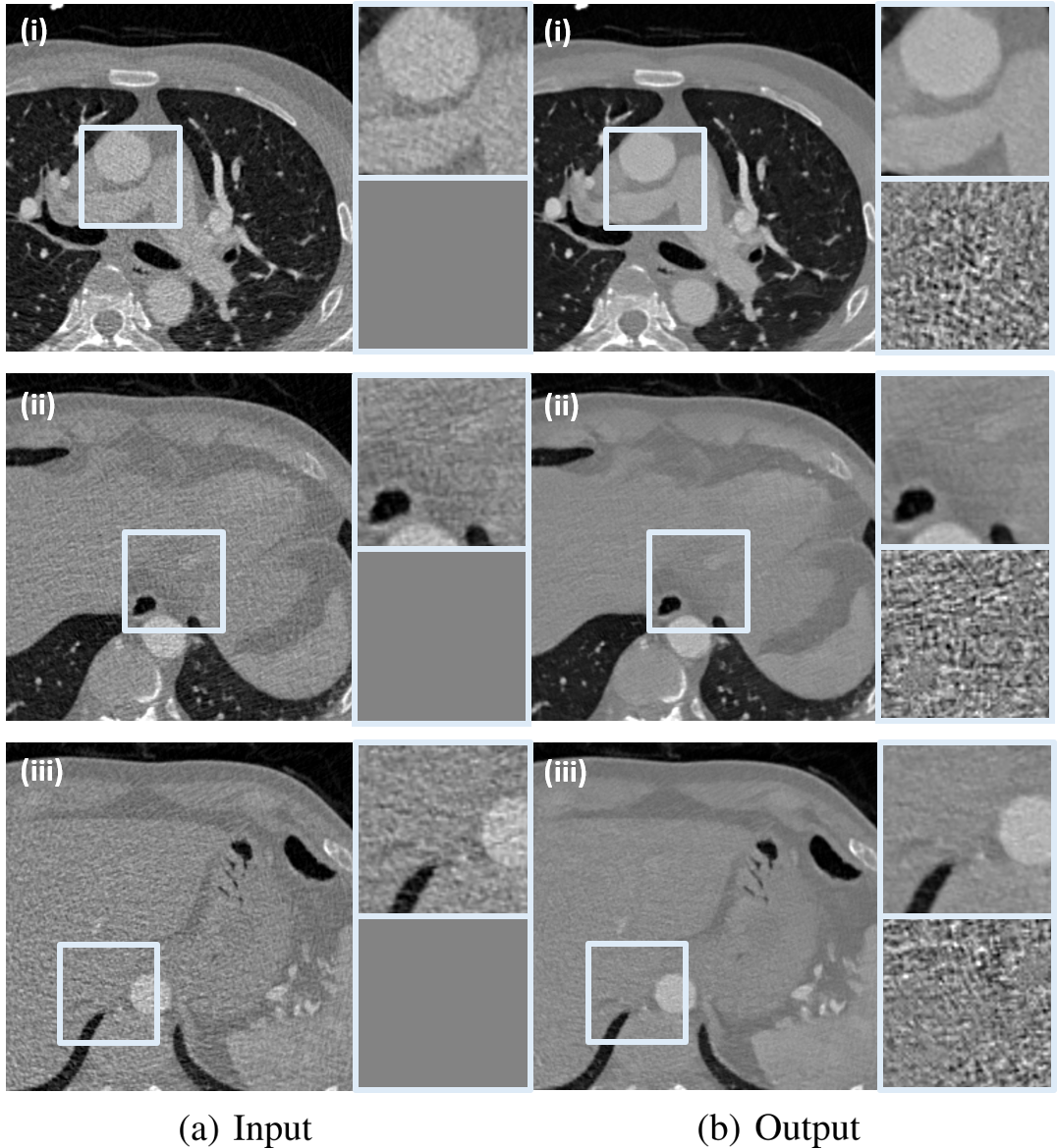}}
	\caption{{Denoising results for 20\% dose CT data using the proposed method. (a) Input low-dose CT images (i-iii), and (b) denoising results by the proposed cycle-free cycleGAN with an invertible generator (i-iii).
	The intensity window of CT image is (-1024, 1024) [HU] and the intensity window of difference is (-200, 200) [HU]
		}
		}
	\vspace*{-0.5cm}
	\label{fig:20dosequalitatives}
\end{figure}

\section{Conclusion}\label{sec:conclusion}

In this paper, we proposed a cycle-free CycleGAN architecture with an invertible generator.
Thanks to the invertibility, only a single pair of a generator and a discriminator is necessary, which
significantly reduced complexity.
 Although the number of trainable parameters are only 10\% of conventional CycleGAN and 15\% of AdaIN-based CycleGAN,
extensive experimental
 results confirmed that the proposed method shows better low-dose CT denoising performance with using significantly reduced learnable parameters.

\section{Acknowledgement}
This work was supported by the National Research Foundation (NRF) of Korea grant NRF-2020R1A2B5B03001980.
{The authors would like to thank Dr. Dong Hyun Yang from the University of Ulsan College of Medicine for providing the multiphase cardiac CT scan dataset.
The authors also thank the Mayo Clinic, the American Association of Physicists in Medicine (AAPM), and the National Institute of Biomedical Imaging and Bioengineering for providing the Low-Dose CT Grand Challenge dataset.}

\appendix

The derivation of the dual formula is simple modification of the technique in \cite{sim2020optimal}.
Consider the primal OT problem:
\begin{align*}
\inf\limits_{\pi \in \Pi(\mu,\nu)}\int_{\Xc\times \Yc}c(\xb,\yb;G_\thetab,F_\phib) d\pi(\xb,\yb) 
\end{align*}
where $\Pi(\mu,\nu)$ refers to the set of the joint distributions with the margins $\mu$ and $\nu$,
and the transportation cost is defined by
\begin{align*}
&c(\xb,\yb;G_\thetab,F_\phib)\notag\\
&= \|\xb-G_\thetab(\yb)\|+ \frac{1}{\beta}\|F_\phib(\xb)-\yb\| +\eta \|\yb-G_\thetab(\yb)\|
\end{align*}
We can easily show that
\begin{align*}
&\inf\limits_{\pi \in \Pi(\mu,\nu)}\int_{\Xc\times \Yc}c(\xb,\yb;G_\thetab,F_\phib) d\pi(\xb,\yb) \\
& = \inf\limits_{\pi \in \Pi(\mu,\nu)}\int_{\Xc\times \Yc}\tilde c(\xb,\yb) d\pi(\xb,\yb) + \eta\ell_\Yc(G_\thetab)
\end{align*} 
where $\ell_\Yc(G_\thetab)$ is defined in \eqref{eq:Yc} and
\begin{align*}
\tilde c(\xb,\yb)= \|\xb-G_\thetab(\yb)\|+ \frac{1}{\beta}\|F_\phib(\xb)-\yb\|
\end{align*}

We now define the optimal joint measure $\pi^*$ for the primal problem. 
Using the Kantorovich dual formulations, we have the following two equalities:
\begin{align}
K:=& \int_{\Xc\times \Yc} \tilde c(\xb,\yb) d\pi^*(\xb,\yb)\notag \\
=  \max_{\varphi} & \left\{ \int_\Yc \inf_\xb \{ \tilde c(\xb,\yb) -\varphi(\xb) \}d\nu(\yb)  + \int_\Xc \varphi(\xb) d\mu(\xb)  \right\} \label{eq:e1}\\ 
=\max_{\psi} &\left\{ \int_\Xc \inf_\yb \{\tilde c(\xb,\yb) -\psi(\yb) \}d\mu(\xb) + \int_\Yc \psi 
(\yb)d\nu(\yb)\right\} \label{eq:e2}
\end{align}
Using  $1$-Lipschitz continuity of the Kantorovich potential $\varphi$, we have
\begin{align*}
&-\varphi(G_\thetab(\yb))\leq \| G_\thetab(\yb) - \xb\| -\varphi(\xb) \leq  c(\xb,\yb)-\varphi(\xb) 
\end{align*}
Using  $1/\beta$-Lipschitz continuity of the Kantorovich potential $\psi$, we have
\begin{align*}
&-\psi(F_\phib(\xb))  \leq  \frac{1}{\beta}\|\yb-F_\phib(\xb)\| -\psi(\yb) \leq  c(\xb,\yb) -\psi(\yb) 
\end{align*}
This leads to two lower bounds and by taking the average of the two, we have
\begin{align*}
K\geq \frac{1}{2}\ell_{GAN}(G_\thetab,F_\phib;\varphi,\psi)
\end{align*}
where $\ell_{GAN}$  is defined in \eqref{eq:Disc}.
For and upper bound,  instead of finding the $\inf_\xb$, we choose {$\xb=G_\thetab(\yb)$} in \eqref{eq:e1}; similarly, instead of $\inf_\yb$,  we chose $\yb=F_\phib(\xb)$
in \eqref{eq:e2}. By taking the average of the two upper bounds, we have
\begin{eqnarray*}
K&\leq \frac{1}{2}\left\{\ell_{GAN}(G_\thetab,F_\phib;\varphi,\psi)+\ell_{cycle}(G_\thetab,F_\phib) \right\}
\end{eqnarray*}
where   $\ell_{cycle}$  is defined \eqref{eq:cycleloss}.
The remaining part of the proof for the dual formula is a simple repetition of the techniques  in \cite{sim2020optimal}.

\bibliographystyle{IEEEtran}
\bibliography{ref,biblio_book}

\end{document}